\let\csname equation*\endcsname=\relax 
\let\csname endequation*\endcsname=\relax 
\newcommand{\N}{{\mathbb N}}
\newcommand{\R}{{\mathbb R}}
\newcommand{\gi}{\mathrm{ g}}
\newtheorem{teo}{Theorem}[section]
\newtheorem{lema}[teo]{Lemma}
\newtheorem{cor}[teo]{Corollary}
\newtheorem{prop}[teo]{Proposition}
\newtheorem{rmk}{Remark}[section]
\begin{document}

\title[A Holonomic Rattleback]{A Holonomic Rattleback}

\author{J. M. Burgos}

\address{Departamento de Matem\'aticas, CINVESTAV--\,CONACYT, Av. Instituto Polit\'ecnico Nacional 2508, Col. San Pedro Zacatenco, 07360 Ciudad de M\'exico, M\'exico.}
\eads{\mailto{burgos@math.cinvestav.mx}}

\begin{abstract}
The rattleback or celt top is a rigid body with the peculiar behaviour of having a spontaneous spin reversion during its motion and this effect is usually attributed purely to its nonholonomic nature. Actually, the rattleback is the paradigmatic example in nonholonomic mechanics. We give an example of a spinning rigid body having spontaneous spin reversion during its motion in the context of holonomic mechanics.
\end{abstract}


\maketitle


\section{Introduction}

The \textit{rattleback}, also known as the \textit{celt top}, is a nonholonomic mechanical system with the curious effect of spontaneous reversion of its spin. The usual explanation is to attribute the effect to the nonholonomic nature of the system arguing that the nonholonomic constraint causes oscillations that eventually consume the spin energy in such a way that after it vanishes, the oscillations decay and the top spins in the other direction. Actually, the rattleback is one of the paradigmatic examples in nonholonomic mechanics.

Since 1895 in \cite{16}, this unintuitive effect has been studied by many authors with radically different approaches and their common feature is to exploit the nonholonomic nature of the system, see \cite{9}, \cite{3}, \cite{7}, \cite{2}, \cite{15}, \cite{8}, \cite{12}, \cite{4} and
the references therein. A different approach was taken in \cite{Integrable_rattleback} where the dynamics of a completely integrable version of the rattleback introduced by the authors is described.

Here, we reproduce the spin reversion effect in the holonomic context by considering an asymmetric rigid body as in the usual case but substituting the nonholonomic contact point on a flat surface by holonomic constraints. We emphasize the non integrability of our holonomic model in contrast with the model  in \cite{Integrable_rattleback}. We also emphasize that the spontaneous spin reversion in our model has nothing to do with the Dzhanibekov effect also known as the intermediate axis Theorem due to the instability of the rotation along the axis with intermediate inertia moment \cite{Tennis_racket}, (\cite{LL}, Section 37).

Now we describe our model. Consider a rigid body with diagonal moment of inertia tensor with inertia moments $i_x$, $i_y$ and $i_z$ along the respective axes $x$, $y$, and $z$. The first holonomic constraint is the fixed center mass and the second holonomic constraint is the configuration space given by rotations of the form
$$R(\gamma,\,\alpha)=X(\gamma)\,Z(\alpha)$$
where $X(\gamma)$ and $Z(\alpha)$ denote the rotation along the $x$ and $z$ axes with angles $\gamma$ and $\alpha$ respectively. The angular coordinate $\gamma$ is subject to a simple harmonic oscillator potential with Hooke's constant $k>0$
\begin{equation}\label{Hooke_constant}
U_k(\gamma)\,=\,k\,\frac{\gamma^2}{2}.
\end{equation}

The point with null initial conditions is an equilibrium point. Moreover, the uniform rotation along the $z$ axis is a solution of the system corresponding with any Hooke's constant $k>0$
$$\gamma_k(t)=0,\qquad \alpha_k(t)=\dot{\alpha}_0\,t.$$
Another more interesting although also trivial solution of the system is in the case where there is an identity of the inertia moments $i_x=i_y$ and the motion decouples in a uniform rotation along the $z$ axis together with a simple harmonic oscillator motion in the angular $\gamma$ coordinate. These solutions are reminiscent of the holonomic nature of the system.

However, if the rigid body is asymmetric with $i_y<i_x$ and a relation between the initial velocity and the inertia moments holds, then its motion has  spontaneous spin reversion. This clearly confronts the paradigmatic classical view about the role of the nonholonomic nature on the spontaneous spin reversion of the rattleback.

\begin{teo}[Rattleback effect]\label{main}
Consider a Hooke's constant $k>0$ and the motion starting at the point with null angular coordinates and non null initial angular velocities $\dot{\gamma}_0$ and $\dot{\alpha}_0$. If
$$i_y\,<\,i_x\qquad\mbox{and}\qquad \frac{\dot{\alpha}_0^2}{\dot{\gamma}_0^2}\,<\,\frac{i_x}{i_z}\,\left(\left(\frac{i_x}{i_y}\right)^{1/2}\,-\,1\right),$$
then for every natural $n$ there is $k_0>0$ such that for every $k\geq k_0$ the motion has at least $n$ spontaneous spin reversions.
\end{teo}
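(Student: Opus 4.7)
The plan is to exploit the singular limit $k\to\infty$, in which $\gamma$ becomes a fast harmonic oscillator with slowly varying effective mass $I(\alpha):=i_x\cos^2\alpha+i_y\sin^2\alpha$, while $\alpha$ evolves on an $O(1)$ time scale. One then reduces to a one-degree-of-freedom averaged Hamiltonian whose dynamics can be read off directly; the hypothesis on the initial data turns out to be exactly the trapping condition for the averaged motion.

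First I would compute the Lagrangian from the body angular velocity $\Omega=(\dot\gamma\cos\alpha,-\dot\gamma\sin\alpha,\dot\alpha)$ of $R(\gamma,\alpha)=X(\gamma)Z(\alpha)$, obtaining
\[
L=\tfrac{1}{2}I(\alpha)\dot\gamma^2+\tfrac{1}{2}i_z\dot\alpha^2-\tfrac{k}{2}\gamma^2.
\]
The $\gamma$ equation is a harmonic oscillator of frequency $\omega(\alpha)=\sqrt{k/I(\alpha)}$, while the $\alpha$ equation reads $i_z\ddot\alpha=-\tfrac{1}{2}(i_x-i_y)\sin(2\alpha)\,\dot\gamma^2$ and is driven by the square of the fast velocity. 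Passing to action-angle variables for the frozen-$\alpha$ oscillator, the action $J_\gamma=E_\gamma/\omega(\alpha)$ is an adiabatic invariant whose initial value is $J_\gamma=\tfrac{1}{2}\dot\gamma_0^2\,i_x^{3/2}/\sqrt{k}$.

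Averaging the $\alpha$ equation over a fast period and using $\langle\dot\gamma^2\rangle=E_\gamma(\alpha)/I(\alpha)=J_\gamma\omega(\alpha)/I(\alpha)$, all factors of $\sqrt{k}$ cancel and the slow motion becomes the $k$-independent Hamiltonian system
\[
H_{\mathrm{eff}}(\alpha,p_\alpha)=\frac{p_\alpha^2}{2\,i_z}+V_{\mathrm{eff}}(\alpha),\qquad V_{\mathrm{eff}}(\alpha):=\frac{\dot\gamma_0^2\,i_x^{3/2}}{2\sqrt{I(\alpha)}}.
\]
Since $i_y<i_x$, $V_{\mathrm{eff}}$ has a strict minimum $\tfrac{1}{2}i_x\dot\gamma_0^2$ at $\alpha=0$ and a strict maximum $\tfrac{1}{2}i_x\dot\gamma_0^2\sqrt{i_x/i_y}$ at $\alpha=\pi/2$. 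A direct comparison identifies the hypothesis on $\dot\alpha_0^2/\dot\gamma_0^2$ as exactly the sub-separatrix bound $E_s(0)<V_{\mathrm{eff}}(\pi/2)$ on the initial slow energy $E_s(0)=\tfrac{1}{2}i_z\dot\alpha_0^2+V_{\mathrm{eff}}(0)$. Under this bound the averaged $\alpha$ oscillates periodically in the well around $0$ with some period $T_s>0$, and its velocity has two transverse zeros per period.

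To conclude, fix $n$ and set $T=n T_s$. A standard one-frequency averaging theorem (Bogoliubov--Mitropolsky, or Neishtadt's adiabatic lemma) gives uniform convergence of the true $(\alpha(t),\dot\alpha(t))$ to the averaged trajectory on $[0,T]$ as $k\to\infty$; the transversality of the zeros of the averaged $\dot\alpha$ then forces at least $n$ sign changes of the true $\dot\alpha$ for all $k$ above some threshold $k_0$. The main obstacle is making this averaging step quantitative and uniform in $k$: the cleanest route is to rewrite the system in action-angle variables for the fast oscillator, observe that $\omega(\alpha)$ is bounded away from zero on the compact slow trajectory so there are no small-denominator issues, and apply the averaging lemma with small parameter $1/\sqrt{k}$. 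The remaining steps (Lagrangian, effective potential, energy comparison) are algebraic and direct.
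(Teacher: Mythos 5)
Your derivation of the Lagrangian, the effective potential $V_{\mathrm{eff}}(\alpha)=\dot\gamma_0^2\,i_x^{3/2}\big(2\sqrt{I(\alpha)}\big)^{-1}$, and the identification of the stated inequality as the sub-separatrix (trapping) condition $E_s(0)<V_{\mathrm{eff}}(\pi/2)$ are all correct and match what the paper obtains: the paper's Proposition~\ref{Strong_const_prop} yields precisely the effective Lagrangian~\eref{Lagrangian_eff} with $\theta=\tfrac12\dot\gamma_0^2\,\gi(\alpha_0)^{3/2}$, which with $\alpha_0=0$ is your $V_{\mathrm{eff}}$. The two proofs diverge, however, in how they justify the passage to the averaged system. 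You reduce to a slow--fast Hamiltonian in action--angle variables for the frozen-$\alpha$ oscillator and invoke a black-box single-fast-phase averaging/adiabatic-invariance theorem (Bogoliubov--Mitropolsky, Neishtadt) with small parameter $\varepsilon=1/\sqrt{k}$, noting the absence of small denominators because $\omega(\alpha)=\sqrt{k/I(\alpha)}$ is bounded below. The paper instead proves the needed convergence from scratch by weak-$*$ convergence and Sobolev compactness arguments in the style of Bornemann--Sch\"utte: it extracts a weak-$*$ limit of $\dot\gamma_k^2$, proves equipartition of the transversal energy via Proposition~\ref{magic}, shows the limiting transversal energy satisfies a first-order ODE, and obtains $\dot\gamma_k^2\rightharpoonup\theta\,\gi(\alpha)^{-3/2}$ (Lemma~\ref{Lema_weak_clave}). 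The paper itself flags the connection between these two viewpoints by citing (\cite{Bornemann}, Section 2.6) and (\cite{Arnold_Kozlov}, Section 5.4). In short, your route is the classical-mechanics one---more transparent physically but outsourcing the hard analytic step to a cited averaging theorem whose hypotheses would need to be checked carefully in the final write-up---whereas the paper's route is self-contained functional analysis at the cost of more machinery (weak-$*$ compactness, Sobolev embeddings, uniqueness of weak solutions). Both yield the same $C^1$ convergence of $\alpha_k$ on compacts, from which the counting of transverse zeros of $\dot\alpha_{\mathrm{eff}}$ gives the $n$ spin reversions exactly as you argue.
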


This phenomenon is due only to the asymmetry of the body and the existence of a constant $\theta$ such that the following family of smooth functions constructed from the solutions of the system weakly star converges to it as $k$ goes to infinity
$$\dot{\gamma}_k^2\,\left(i_x\,\cos^2(\alpha_k)\,+\,i_y\,\sin^2(\alpha_k)\right)^{3/2}\,\rightharpoonup\, \theta.$$
That is to say, on every open bounded set the average on the right hand side converges to the average on the left. This will be proved in subsection \ref{Proof_Strong}, Lemma \ref{Lema_weak_clave}, see also Remark \ref{Remark2}. In (\cite{Bornemann}, Section 2.6) using a result in (\cite{Arnold_Kozlov}, Section 5.4) it is shown the relation of this weak convergence with the classical notion of \textit{adiabatic invariance}. We emphasize the purely holonomic nature of the effect.

The right hand side of the relation in Theorem \ref{main} is a critical value of a phase transition in the dynamics of the rattleback. Indeed, we prove the following

\begin{teo}[No rattleback effect]\label{main2}
Consider a Hooke's constant $k>0$ and the motion $(\gamma_k(t),\,\alpha_k(t))$ starting at the point with null angular coordinates and non null initial angular velocities $\dot{\gamma}_0$ and $\dot{\alpha}_0$. If $i_y\,\geq\,i_x$ or
$$i_y\,<\,i_x\qquad\mbox{and}\qquad \frac{\dot{\alpha}_0^2}{\dot{\gamma}_0^2}\,>\,\frac{i_x}{i_z}\,\left(\left(\frac{i_x}{i_y}\right)^{1/2}\,-\,1\right),$$
then for every $A>0$ there are $T>0$ and $k_0>0$ such that for every $k\geq k_0$ we have $\vert\,\alpha_k(T)\,\vert> A$ and $\alpha_k$ is strictly monotonic on the interval $[0,T]$.
\end{teo}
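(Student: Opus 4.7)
The plan is to pass to an effective slow dynamics for $\alpha$ as $k\to\infty$ by averaging the fast oscillations of $\gamma_k$, and then use conservation of the effective energy to rule out turning points. With $R(\gamma,\alpha)=X(\gamma)Z(\alpha)$ the body angular velocity is $(\dot\gamma\cos\alpha,\,-\dot\gamma\sin\alpha,\,\dot\alpha)$ and the Lagrangian is $L=\tfrac12\dot\gamma^2\,I(\alpha)+\tfrac12 i_z\dot\alpha^2-\tfrac12 k\gamma^2$, where $I(\alpha)=i_x\cos^2\alpha+i_y\sin^2\alpha$. The Euler--Lagrange equation for $\alpha$ integrates to
\[
i_z\dot\alpha_k(t)-i_z\dot\alpha_0\;=\;-\int_0^t \frac{(i_x-i_y)\sin(2\alpha_k(s))}{2\,I(\alpha_k(s))^{3/2}}\,\bigl[\dot\gamma_k(s)^2\,I(\alpha_k(s))^{3/2}\bigr]\,ds.
\]
Total-energy conservation gives $|\dot\alpha_k|\le\sqrt{2E_0/i_z}$ and $\dot\gamma_k^2 I(\alpha_k)\le 2E_0$, so $\{\alpha_k\}$ is equicontinuous and $\dot\gamma_k^2\,I(\alpha_k)^{3/2}$ is uniformly bounded in $L^\infty$. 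Along a subsequence $\alpha_k\to\alpha_\infty$ uniformly on compacta, and hence $\sin(2\alpha_k)/I(\alpha_k)^{3/2}$ converges uniformly to its $\alpha_\infty$ counterpart. Combining with the weak-$*$ limit of Lemma \ref{Lema_weak_clave}, the right-hand side converges uniformly on $[0,T]$ and $\alpha_\infty$ satisfies the conservative effective equation
\[
i_z\ddot\alpha_\infty\;=\;-V'_{\mathrm{eff}}(\alpha_\infty),\qquad V_{\mathrm{eff}}(\alpha)=\frac{\theta}{\sqrt{I(\alpha)}},
\]
with conserved energy $E_{\mathrm{eff}}=\tfrac12 i_z\dot\alpha_\infty^2+V_{\mathrm{eff}}(\alpha_\infty)$. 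Picard--Lindel\"of uniqueness for this smooth ODE promotes subsequential to full convergence.

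Next I would identify the constants. At $t=0$ the fast oscillator has $\gamma=0$, so its energy is purely kinetic and equals $\tfrac12 i_x\dot\gamma_0^2$; by equipartition, the average of $\dot\gamma^2$ over one fast period is $\tfrac12\dot\gamma_0^2$, whence $\theta=\tfrac12 i_x^{3/2}\dot\gamma_0^2$ and $E_{\mathrm{eff}}=\tfrac12 i_z\dot\alpha_0^2+V_{\mathrm{eff}}(0)=\tfrac12 i_z\dot\alpha_0^2+\tfrac12 i_x\dot\gamma_0^2$. Since $V_{\mathrm{eff}}$ attains its maximum where $I$ is smallest, if $i_y\ge i_x$ one has $\max V_{\mathrm{eff}}=V_{\mathrm{eff}}(0)$ and $E_{\mathrm{eff}}-\max V_{\mathrm{eff}}=\tfrac12 i_z\dot\alpha_0^2>0$ because $\dot\alpha_0\ne 0$. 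If instead $i_y<i_x$, then $\max V_{\mathrm{eff}}=\theta/\sqrt{i_y}$ and a direct rearrangement shows that $E_{\mathrm{eff}}>\max V_{\mathrm{eff}}$ is equivalent precisely to $\dot\alpha_0^2/\dot\gamma_0^2>(i_x/i_z)(\sqrt{i_x/i_y}-1)$. In both cases there is $c>0$ with $\dot\alpha_\infty(t)^2\ge c^2$ for all $t$, so $\alpha_\infty$ is strictly monotonic with $|\alpha_\infty(t)|\ge c|t|$.

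Finally, given $A>0$ I would choose $T$ with $cT>2A$. From the uniform convergence $\dot\alpha_k\to\dot\alpha_\infty$ on $[0,T]$ established above, for all $k$ beyond some $k_0$ one has $|\dot\alpha_k-\dot\alpha_\infty|<c/2$ throughout $[0,T]$, hence $\dot\alpha_k$ keeps the sign of $\dot\alpha_\infty$, $\alpha_k$ is strictly monotonic on $[0,T]$, and $|\alpha_k(T)|\ge cT/2>A$. The main technical obstacle is the uniform-in-$t$ passage to the averaged equation: the weak-$*$ convergence of Lemma \ref{Lema_weak_clave} is tested against a single fixed $L^1$ function, whereas here the test function $\sin(2\alpha_k)/I(\alpha_k)^{3/2}$ itself varies with $k$, so one must carefully combine the weak-$*$ limit with the uniform convergence $\alpha_k\to\alpha_\infty$ and with equicontinuity of the $t$-integrals to upgrade to $C^0([0,T])$ convergence of $\dot\alpha_k$.
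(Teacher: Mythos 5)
Your proof is correct and takes essentially the same route as the paper: both pass to the effective $\alpha$-dynamics via the weak-$*$ homogenization of Lemma~\ref{Lema_weak_clave} (the content of Proposition~\ref{Strong_const_prop}) and then conclude from conservation of the effective energy that the motion is supercritical. Where the paper simply cites Proposition~\ref{Strong_const_prop} and asserts supercriticality, you re-derive the averaged equation inline and make explicit the equivalence $E_{\mathrm{eff}}>\max V_{\mathrm{eff}}\iff \dot\alpha_0^2/\dot\gamma_0^2>(i_x/i_z)\bigl(\sqrt{i_x/i_y}-1\bigr)$, which fills in a calculation the paper leaves implicit; the ``technical obstacle'' you flag at the end is already handled by Proposition~\ref{module} (strong $L^\infty$ convergence of $\gi'(\alpha_k)$ times weak-$*$ convergence of $\dot\gamma_k^2$) together with equicontinuity of $\dot\alpha_k$.
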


The strategy of the proof is to consider the limit as the Hooke's constant goes to infinity as a \textit{strong constraining problem}. This problem concerns the question of whether the theory of \textit{ideal constraints}, that is to say those verifying the \textit{D'Alembert-Lagrange principle}, is recovered as the limit of stiff potentials that force the motion to the constraint, that is to say whether the motion equations resulting from the limit of stiff potentials coincide with those resulting from the least action principle subject to the respective ideal constraints. We will refer to these as \textit{real constraints}. See the sections 3.6 and 3.10 in Gallavotti's book \cite{Gallavotti} for a nice discussion on ideal and real constraints as well as for historical references.

The answer to the previous question is in general no and the responsible is the high frequency behaviour at the limit. The problem of obtaining the right description of the motion for a real constraint at this high frequency limit is part of the strong constraining problem. 

The first work on strong constraining is due to Rubin and Ungar in \cite{RubinUngar}. Later, claiming that his ``research started with an attempt to clarify some provoking remarks in (\cite{Arnold}, sections 17.A and 21.A)", Takens in \cite{Takens} studies the strong constraining of a family of quadratic potentials verifying a non resonant condition. He also shows that resonance leads to funnels in the limit motion also known as Takens chaos. In \cite{BS}, Bornemann and Sch\"utte used weak convergence methods to study the problem for Hamiltonian systems in codimension one. In \cite{Bornemann}, Bornemann generalized the previous result to arbitrary codimension and generalized the non resonant condition as well. All of the previous considered potentials are non degenerate at the constraint. The problem of strong degenerate constraining in codimension one is part of a current research by the author, see \cite{Burgos}.

As a concrete physical example of a high frequency limit, consider the Kapitza's inverted pendulum (\cite{Arnold}, section 25.E)
$$\ddot{\theta}_\varepsilon = \left( a+\frac{b}{\varepsilon}\cos\left(\frac{t}{\varepsilon}\right)\right)\sin(\theta_\varepsilon),\ \ \ \theta_\varepsilon(0)=\alpha,\ \dot{\theta}_\varepsilon(0)= \beta,\ t\geq 0,\ \varepsilon>0.$$
As $\varepsilon$ goes to zero, the solutions $\theta_\varepsilon$ present a high frequency behaviour and converge locally uniformly to the solution of the problem
$$\ddot{\theta}= a\sin(\theta)- \frac{b^{2}}{4}\sin(2\theta),\ \ \ \theta(0)=\alpha,\ \dot{\theta}(0)= \beta,\ t\geq 0,\ \varepsilon>0.$$
The null equilibrium solution is Lyapunov stable, explaining the stability of the Kapitza's inverted pendulum at the high frequency limit. See \cite{Evans_av} for a proof using weak convergence methods. Note that the first term of the effective equation is the average of the original equation while the second is due to high frequency effects.

A similar phenomenon as in the previous example occurs in our system where the high frequency effect as the Hooke's constant goes to infinity is the responsible for an effective potential which is the cause of the spontaneous spin reversion in Theorem \ref{main}. This will be proved in section \ref{Section_Proof}.

Looking forward for a self contained exposition, after the necessary preliminaries on weak convergence methods in subsection \ref{Weak_preliminaries}, in subsection \ref{Proof_Strong} we will calculate the effective Lagrangian of the system at the high frequency limit and prove the corresponding strong constraining result for our model.


The Lagrangian of the system will be calculated in section \ref{Section_Lagrangian}. As an immediate consequence of the Euler-Lagrange equations of this Lagrangian, we conclude that this holonomic model of the rattleback also reproduces the \textit{tapping effect}: even if the body is at rest with respect to some generic $\alpha$ angular coordinate, a non null initial velocity in the $\gamma$ angular coordinate implies a force in the $\alpha$ coordinate and the body starts to spin along the $z$ axis as well. Concretely,

\begin{rmk}[Tapping effect]\label{main3}
For every Hooke's constant $k>0$, the motion starting at the point $(0,\,\alpha_0)$ with angular velocity $(\dot{\gamma}_0,\,0)$ has acceleration at the starting point
$$\ddot{\gamma}_k(0)\,=\,0,\qquad \ddot{\alpha}_k(0)\,=\,i_z^{-1}\,(i_y-i_x)\,\cos(\alpha_0)\,\sin(\alpha_0)\,\dot{\gamma}_0^2.$$
In particular, if the body is asymmetric around the $z$ axis, that is to say $i_x\neq i_y$, $\alpha_0$ is not an integer multiple of $\pi/2$ and $\dot{\gamma}_0$ is non null, then the body has a non null angular acceleration along the $z$ axis.
\end{rmk}

\section{Lagrangian of the system}\label{Section_Lagrangian}

Denote by $X$, $Y$ and $Z$ the rotations about the respective coordinate axes in $\R^3$ and parametrize a rotation $R$ in $SO(3)$ as
\begin{equation}\label{angular_coord}
R=X(\gamma)\,Y(\beta)\,Z(\alpha).
\end{equation}

A motion in $SO(3)$ has the angular velocity
$$
\begin{pmatrix}
0 & -\omega_z & \omega_y\\
\omega_z & 0 & -\omega_x\\
-\omega_y & \omega_x & 0
\end{pmatrix}
=R^{-1}\,\dot{R}$$
in the Lie algebra $so(3)$. In terms of the angular coordinates \eref{angular_coord} we have
\begin{eqnarray*}
&& R^{-1}\,\dot{R}\, =\, R^{-1}\left(\partial_\alpha R\,\dot{\alpha}+\partial_\beta R\,\dot{\beta}+\partial_\gamma R\,\dot{\gamma}\right) \\
&=& Z^{-1}\partial_\alpha Z\,\dot{\alpha}+ Z^{-1}\left(Y^{-1}\partial_\beta Y\right) Z\,\dot{\beta} + Z^{-1}\,Y^{-1}\left(X^{-1}\partial_\gamma X\right) Y\,Z\,\dot{\gamma} \\
&=& L_z\,\dot{\alpha}+ Ad_{Z^{-1}}\,(L_y)\,\dot{\beta} + Ad_{(YZ)^{-1}}\,(L_x)\,\dot{\gamma}
\end{eqnarray*}
where $L_x$, $L_y$ and $L_z$ denote the usual basis of the Lie algebra $so(3)$ given by
$$L_x=\begin{pmatrix}
0 & 0 & 0\\
0 & 0 & -1\\
0 & 1 & 0
\end{pmatrix},\quad
L_y=\begin{pmatrix}
0 & 0 & 1\\
0 & 0 & 0\\
-1 & 0 & 0
\end{pmatrix},\quad
L_z=\begin{pmatrix}
0 & -1 & 0\\
1 & 0 & 0\\
0 & 0 & 0
\end{pmatrix}$$
and $Ad$ denotes the adjoint representation of the group $SO(3)$ in its Lie algebra
$$Ad:SO(3)\rightarrow Aut\left(so(3)\right),\quad Ad_M(L)=M\,L\,M^{-1}.$$
In particular, we have the following relation between the angular velocity vector $\omega$ and the angular coordinates
\begin{equation}\label{angular_velocity_relation}
\omega=\begin{pmatrix}
\omega_x\\
\omega_y\\
\omega_z
\end{pmatrix}=
\begin{pmatrix}
\cos(\alpha)\cos(\beta) & \sin(\alpha) & 0\\
-\sin(\alpha)\cos(\beta) & \cos(\alpha) & 0\\
\sin(\beta) & 0 & 1
\end{pmatrix}\,
\begin{pmatrix}
\dot{\gamma}\\
\dot{\beta}\\
\dot{\alpha}
\end{pmatrix}.
\end{equation}

The kinetic energy of a rigid body with moment of inertia tensor $I$ reads
$$K(\omega)=\frac{1}{2}\,\langle\,I\,\omega\,,\,\omega\,\rangle$$
where $\omega$ denotes the angular velocity vector. Imposing the holonomic constraint $\beta=0$ as well as the quadratic potential \eref{Hooke_constant} on the angular coordinate $\gamma$ and assuming that the moment of inertia tensor is diagonal, by the identity \eref{angular_velocity_relation} the Langrangian of the mechanical system reads as follows
\begin{equation}\label{Lagrangian}
L_k(\gamma,\, \alpha,\, v_\gamma,\, v_\alpha)= \frac{1}{2}\,(i_x\cos^2(\alpha)+i_y\sin^2(\alpha))\,v_\gamma^2\,+\,\frac{i_z}{2}\,v_\alpha^2\, -\,\frac{k}{2}\,\gamma^2.
\end{equation}

The Euler-Lagrange equations of the mechanical system are
\begin{eqnarray}\label{Euler_Lagrange_eqn}
\ddot{\gamma}_k\,&=&\,-\gi^{-1}\,\gi'\,\dot{\alpha}_k\,\dot{\gamma}_k\,-\,k\,\gi^{-1}\,\gamma_k \\
\ddot{\alpha}_k\,&=&\,\frac{i_z^{-1}}{2}\,\gi'\,\dot{\gamma}_k^2
\end{eqnarray}
where the function $\gi$ and its derivative are evaluated on $\alpha_k$ and this function is defined by
$$\gi(\alpha)\,=\,i_x\cos^2(\alpha)+i_y\sin^2(\alpha).$$

An immediate consequence of these equations is the tapping effect described in Remark \ref{main3}.

\section{Proof of Theorems \ref{main} and \ref{main2}}\label{Section_Proof}

In the next section we will prove the following

\begin{prop}\label{Strong_const_prop}
The solution $(\gamma_k,\,\alpha_k)$ of the Euler-Lagrange equations of the Lagrangian \eref{Lagrangian} with initial conditions
$$(\gamma_k(0),\,\alpha_k(0))\,=\,(0,\alpha_0),\qquad (\dot{\gamma}_k(0),\,\dot{\alpha}_k(0))\,=\,(\dot{\gamma}_0,\,\dot{\alpha}_0)$$
uniformly converges over compact sets to $(0,\,\alpha_{eff})$ where $\alpha_{eff}$ is the solution of the Euler-Lagrange equation of the Lagrangian
\begin{equation}\label{Lagrangian_eff}
L_{eff}(\alpha,\, v_\alpha)= \frac{i_z}{2}\,v_\alpha^2\, -\, \theta\,\left(i_x\,\cos^2(\alpha)\,+\,i_y\sin^2(\alpha)\right)^{-1/2}
\end{equation}
with initial conditions
$$\alpha_{eff}(0)\,=\,\alpha_0,\qquad \dot{\alpha}_{eff}(0)\,=\,\dot{\alpha}_0$$
where the constant $\theta$ is
$$\theta\,=\, \frac{\dot{\gamma}_0^2}{2}\,\left(i_x\,\cos^2(\alpha_0)\,+\,i_y\sin^2(\alpha_0)\right)^{3/2}.$$
Moreover, the convergence $\alpha_k\rightarrow\alpha_{eff}$ is $C^1$ over compact sets and
$$\dot{\gamma}_k^2\,\left(i_x\,\cos^2(\alpha_k)\,+\,i_y\,\sin^2(\alpha_k)\right)^{3/2}\,\rightharpoonup\, \theta.$$
weakly star over compact sets as $k\to +\infty$.
\end{prop}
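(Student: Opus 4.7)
My plan is to exploit conservation of energy to get uniform a priori bounds, isolate a virial-type identity that produces the weak-$\ast$ limit of the fast quadratic $\dot{\gamma}_k^2$, and then pass to the limit in the $\alpha$ Euler--Lagrange equation. The conserved Hamiltonian of $L_k$ is
$$E_k \;=\; \tfrac{1}{2}\gi(\alpha_k)\dot{\gamma}_k^2 + \tfrac{i_z}{2}\dot{\alpha}_k^2 + \tfrac{k}{2}\gamma_k^2 \;\equiv\; E_0 \;:=\; \tfrac{1}{2}\gi(\alpha_0)\dot{\gamma}_0^2 + \tfrac{i_z}{2}\dot{\alpha}_0^2,$$
and since $\gi\geq \min(i_x,i_y)>0$, this at once yields $|\dot{\gamma}_k|,|\dot{\alpha}_k|\leq C$ and $|\gamma_k|\leq C/\sqrt{k}$ uniformly on any bounded time interval. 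In particular $\gamma_k\to 0$ uniformly on compacts, and the second Euler--Lagrange equation bounds $|\ddot{\alpha}_k|\leq \tfrac{1}{2i_z}\|\gi'\|_\infty\|\dot{\gamma}_k\|_\infty^2$, so $\{\dot{\alpha}_k\}$ is equi-Lipschitz. Arzel\`a--Ascoli extracts a subsequence with $\alpha_k\to\alpha_\infty$ in $C^1_{\mathrm{loc}}$.

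The decisive step will be the weak-$\ast$ limit of $\gi(\alpha_k)^{3/2}\dot{\gamma}_k^2$. I would introduce the virial quantity
$$Q_k \;:=\; \gamma_k\,\dot{\gamma}_k\,\gi(\alpha_k)^{3/2},$$
which vanishes uniformly at rate $1/\sqrt{k}$. Differentiating and inserting $\gi\ddot{\gamma}_k = -\gi'\dot{\alpha}_k\dot{\gamma}_k - k\gamma_k$ produces
$$\dot{Q}_k \;=\; \gi(\alpha_k)^{3/2}\dot{\gamma}_k^2 \,-\, k\,\gi(\alpha_k)^{1/2}\gamma_k^2 \,+\, \tfrac{1}{2}\gi(\alpha_k)^{1/2}\gi'(\alpha_k)\dot{\alpha}_k\gamma_k\dot{\gamma}_k,$$
the last term being $O(1/\sqrt{k})$. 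Testing against $\varphi\in C_c^\infty$ and integrating by parts gives $\int\varphi\dot{Q}_k\,dt = -\int\varphi' Q_k\,dt \to 0$, hence the virial equipartition $\gi(\alpha_k)^{3/2}\dot{\gamma}_k^2 - k\,\gi(\alpha_k)^{1/2}\gamma_k^2\rightharpoonup 0$. On the other hand, multiplying the energy identity by $\gi(\alpha_k)^{1/2}$ shows that the \emph{sum} $\gi(\alpha_k)^{3/2}\dot{\gamma}_k^2 + k\,\gi(\alpha_k)^{1/2}\gamma_k^2 = (2E_0 - i_z\dot{\alpha}_k^2)\gi(\alpha_k)^{1/2}$ converges \emph{uniformly} to $(2E_0 - i_z\dot{\alpha}_\infty^2)\gi(\alpha_\infty)^{1/2}$. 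Adding and halving,
$$\gi(\alpha_k)^{3/2}\dot{\gamma}_k^2 \;\rightharpoonup\; R(t)\;:=\;\gi(\alpha_\infty)^{1/2}\bigl(E_0 - \tfrac{i_z}{2}\dot{\alpha}_\infty^2\bigr).$$

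To close the loop I factor $\ddot{\alpha}_k = \frac{\gi'(\alpha_k)}{2i_z\,\gi(\alpha_k)^{3/2}}\cdot\gi(\alpha_k)^{3/2}\dot{\gamma}_k^2$; the first factor converges uniformly and the second weakly-$\ast$, so the product converges weakly-$\ast$ to $\frac{\gi'(\alpha_\infty)R}{2i_z\,\gi(\alpha_\infty)^{3/2}}$. Since $\dot{\alpha}_k\to\dot{\alpha}_\infty$ uniformly, this weak limit must coincide distributionally with $\ddot{\alpha}_\infty$, so $\alpha_\infty$ obeys the autonomous ODE
$$\ddot{\alpha}_\infty \;=\; \frac{\gi'(\alpha_\infty)\bigl(E_0 - \tfrac{i_z}{2}\dot{\alpha}_\infty^2\bigr)}{2 i_z\,\gi(\alpha_\infty)}.$$
A short direct computation shows $\dot{R}\equiv 0$ along any solution of this ODE, while the initial condition $\gamma_k(0)=0$ gives $R(0) = \gi(\alpha_0)^{1/2}\cdot\tfrac{1}{2}\gi(\alpha_0)\dot{\gamma}_0^2 = \theta$. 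Hence $R\equiv\theta$, the limit ODE collapses to the Euler--Lagrange equation of $L_{\mathrm{eff}}$ with data $(\alpha_0,\dot{\alpha}_0)$, uniqueness forces $\alpha_\infty=\alpha_{\mathrm{eff}}$, and the limit being subsequence-independent upgrades to full-sequence $C^1_{\mathrm{loc}}$ convergence, while the weak limit $\theta$ is precisely the claimed one.

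The hard part will be the virial equipartition: it is the only step where the rapid $\gamma$-oscillations are genuinely used, and the weighting $\gi^{3/2}$ is not arbitrary but forced by the requirement that the cross term $\tfrac{1}{2}\gi^{1/2}\gi'\dot{\alpha}\gamma\dot{\gamma}$ generated by $\dot{Q}_k$ be $O(1/\sqrt{k})$ while the remaining two terms combine linearly with the $\gi^{1/2}$-weighted energy to separate kinetic from potential contributions; any other exponent either leaves a nonvanishing cross-term or degenerates into a tautology. Once this weak equipartition is secured, the identification of $L_{\mathrm{eff}}$ is purely algebraic and hinges on the observation that $R$ is a first integral of the limit ODE.
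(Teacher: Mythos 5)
Your proof is correct, and its core mechanism differs from the paper's in a way worth noting. Both arguments share the same skeleton: energy conservation gives the a priori bounds and $\gamma_k = O(1/\sqrt{k})$, Arzel\`a--Ascoli yields $C^1_{\mathrm{loc}}$-convergence of a subsequence of $(\alpha_k)$, a virial-type argument yields weak-$\ast$ equipartition between the fast kinetic and potential energies, and a uniqueness argument upgrades subsequence convergence to full-family convergence. Where you diverge is in the identification of the weak-$\ast$ limit. The paper first extracts weak-$\ast$ limits $\sigma$ of $\frac{k_n}{2}\gamma_n^2$ and $\pi$ of $\dot\gamma_n^2$ via Banach--Alaoglu, proves the equipartition $\sigma=\tfrac{\gi(\alpha)}{2}\pi$ using its Proposition \ref{magic} on the quantity $\dot\gamma_n\gamma_n/2$, then shows the transversal energy $E_n^\perp=\tfrac{\gi}{2}\dot\gamma_n^2+\tfrac{k_n}{2}\gamma_n^2$ converges weak-$\ast$ in the Sobolev space $W^{1,\infty}$, which forces $\pi$ to solve the ODE $\dot\pi=-\tfrac32\tfrac{d}{dt}\log\gi(\alpha)\cdot\pi$ and hence equals $\theta\,\gi(\alpha)^{-3/2}$; the constant $\theta$ is then pinned down by passing to the limit in the energy at $t=0$. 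You instead load the virial quantity with the weight $\gi^{3/2}$, so that $\dot Q_k$ directly produces $\gi^{3/2}\dot\gamma_k^2-k\gi^{1/2}\gamma_k^2\rightharpoonup 0$, add to it the $\gi^{1/2}$-weighted pointwise energy identity $\gi^{3/2}\dot\gamma_k^2+k\gi^{1/2}\gamma_k^2=(2E_0-i_z\dot\alpha_k^2)\gi^{1/2}$ which converges \emph{uniformly}, and thereby obtain the weak-$\ast$ limit $R=\gi(\alpha_\infty)^{1/2}(E_0-\tfrac{i_z}{2}\dot\alpha_\infty^2)$ in one stroke, with no auxiliary ODE and no Sobolev weak-$\ast$ compactness needed. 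You then observe that $R$ is a first integral of the limit $\alpha$-equation and evaluate at $t=0$ to recover $\theta$. The trade is instructive: the paper's route isolates the ODE $\dot\pi=-\tfrac32(\log\gi(\alpha))'\,\pi$, making the adiabatic-invariant structure visible, while your route bypasses that machinery by exploiting the conserved quantity of the effective dynamics directly, which is shorter and requires only integration by parts plus the Arzel\`a--Ascoli extraction. One small caveat: your claim that the exponent $3/2$ is ``forced'' by the smallness of the cross term is not quite accurate, since for any exponent $s$ the cross term $(s-1)\gi^{s-1}\gi'\dot\alpha_k\gamma_k\dot\gamma_k$ is $O(1/\sqrt{k})$; what is special about $s=3/2$ is that $\gi^{3/2}\pi$ becomes the conserved quantity of the limit ODE, which is what lets you evaluate at $t=0$ and close the argument.
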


In particular, the energy of the effective mechanical system with null initial condition $\alpha_0$ is
\begin{equation}\label{Energy_eff}
E_{eff}(\alpha,\, v_\alpha)= \frac{i_z}{2}\,v_\alpha^2\, +\, \frac{i_x}{2}\,\dot{\gamma}_0^2\,\left(\cos^2(\alpha)+(i_y/i_x)\,\sin^2(\alpha)\right)^{-1/2}.
\end{equation}


\begin{proof}[Proof of Theorem \ref{main}]
Because $\dot{\alpha}_0$ is non null, by the hypotheses the effective motion $\alpha_{eff}$ is subcritical hence is trapped going back and forth and bouncing on some interval $[-\alpha_*,\,\alpha_*]\subset [-\pi/2,\,\pi/2]$. Then, for every natural $n$ there is a time $T$ such that the effective motion $\alpha_{eff}$ goes back and forth $n$ times on the interval $[0,T]$. By the uniform convergence on $[0,T]$ as $k\to +\infty$, we have the result.
\end{proof}

\begin{proof}[Proof of Theorem \ref{main2}]
Because $\dot{\alpha}_0$ is non null, by the hypotheses the effective motion $\alpha_{eff}$ is supercritical hence it is not trapped and is strictly monotonic. Then, $\alpha_{eff}$ is strictly monotonic and for every $A>0$ there is a time $T>0$ such that $\vert\,\alpha_{eff}(T)\,\vert > A$. By the uniform convergence of the angular coordinates and the $C^1$ convergence of the $\alpha$ coordinate on $[0,T]$ as $k\to +\infty$, we have the result.
\end{proof}

\section{Strong constraining of the system}

This section is devoted to preliminaries on weak convergence and the proof of Proposition \ref{Strong_const_prop}.

\subsection{Preliminaries on weak convergence}\label{Weak_preliminaries}

This preliminary section contains the results needed for the proof of Proposition \ref{Strong_const_prop} in the next section. For the proofs and details, we refer the reader to the classical functional analysis reference \cite{Rudin} and to \cite{Adams} for Sobolev spaces and the references therein.

Consider a real Banach space $(E,\Vert\cdot\Vert_E)$ and its dual space $E^{*}$ consisting of bounded linear functionals of $E$. The operator norm $\Vert\cdot\Vert_{E^{*}}$ induces the \textit{strong topology} on $E^{*}$. The \textit{weak topoloqy} $\omega$ on $E^{*}$ is the coarsest topology such that every functional on $E^{**}$ is continuous. In particular,
$$f_n\mathop{\rightharpoonup}^{\omega} f\qquad if\qquad F(f_n)\rightarrow F(f),\ F\in E^{**}.$$
There is a canonical isometric embedding of the space $E$ on its bidual space
$$J:E\rightarrow E^{**},\qquad (x\mapsto \hat{x}),\qquad \hat{x}(h)= h(x).$$

The \textit{weak star topology} $\omega^{*}$ on $E^{*}$ is the coarsest topology such that every functional in $J(E)$ is continuous. In particular,
$$f_n\mathop{\rightharpoonup}^{\omega^{*}} f\qquad if \qquad f_n(h)\rightarrow f(h),\ h\in E.$$
If the space $E$ is reflexive, i.e if $J$ is an isomorphism, then the weak and weak star topologies on $E^{*}$ coincide. Moreover, the converse is also true.

As a direct application of the Banach-Steinhaus Theorem we have:

\begin{prop}\label{Bound}
Every weakly star convergent sequence in $E^{*}$ is bounded.
\end{prop}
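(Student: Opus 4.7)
The plan is to apply the Banach-Steinhaus (Uniform Boundedness) Theorem, as already announced in the text immediately preceding the statement. The argument has two essentially independent ingredients: a translation of the weak star convergence hypothesis into pointwise boundedness on $E$, and then the uniform boundedness principle on the Banach space $E$.

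First I would fix a weakly star convergent sequence $f_n \rightharpoonup^{\omega^{*}} f$ in $E^{*}$. By definition of the weak star topology, for every $h \in E$ the scalar sequence $\{f_n(h)\}_n$ converges to $f(h)$ in $\R$, and every convergent numerical sequence is bounded, so
$$\sup_n |f_n(h)|\,<\,\infty,\qquad h\in E.$$
This is exactly the statement that the family $\{f_n\}_n\subset E^{*}$ is pointwise bounded on the domain $E$.

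Next I would verify the hypotheses of Banach-Steinhaus: the domain $(E,\Vert\cdot\Vert_E)$ is complete by assumption, and each $f_n$ is a continuous linear functional because it belongs to $E^{*}$. The uniform boundedness principle then upgrades the pointwise bound above to the uniform bound $\sup_n \Vert f_n\Vert_{E^{*}} < \infty$, which is precisely the claim.

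I do not expect any obstacle: the entire content is the completeness of $E$ combined with Banach-Steinhaus, while the weak star assumption contributes only through the trivial observation that convergent scalar sequences are bounded. Note that reflexivity of $E$ is not needed, since the pointwise boundedness test is applied directly against elements of $E$ rather than of $E^{**}$.
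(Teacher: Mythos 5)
Your proof is correct and follows exactly the route the paper indicates: the paper presents this proposition as ``a direct application of the Banach-Steinhaus Theorem,'' and your argument fills in precisely those steps, translating weak star convergence into pointwise boundedness on the Banach space $E$ and then invoking the uniform boundedness principle.
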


\noindent As a direct application of the Banach-Alaoglu Theorem we have:

\begin{prop}\label{Alaoglu}
If $E$ is separable, then every bounded sequence in $E^{*}$ has a weakly star convergent subsequence.
\end{prop}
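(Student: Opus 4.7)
The plan is to bootstrap from the (topological) Banach--Alaoglu theorem, which the paper already cites, to its sequential form. Banach--Alaoglu asserts that for every $R>0$ the closed ball $B_R=\{f\in E^{*}:\Vert f\Vert_{E^{*}}\leq R\}$ is compact in the weak star topology. Given a bounded sequence $\{f_n\}\subset E^{*}$, setting $R=\sup_n \Vert f_n\Vert_{E^{*}}$ places the entire sequence inside $B_R$. If $B_R$ were metrizable in the weak star topology, then compactness would automatically upgrade to sequential compactness, and a weak star convergent subsequence would be extracted immediately. So the entire proof reduces to establishing metrizability of $B_R$ under the separability hypothesis.

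The metrizability step is where the separability of $E$ enters essentially. Fix a countable dense subset $\{x_m\}_{m\geq 1}$ of the unit ball of $E$ and define
$$d(f,g)\,=\,\sum_{m=1}^{\infty}\,2^{-m}\,\frac{\vert f(x_m)-g(x_m)\vert}{1+\vert f(x_m)-g(x_m)\vert},\qquad f,g\in B_R.$$
One checks that $d$ is a genuine metric on $B_R$, using that $\{x_m\}$ is total in $E$ to rule out degeneracy. The content of the argument is to show that $d$ induces the weak star topology on $B_R$. One direction is straightforward: weak star convergence in $B_R$ forces convergence at each $x_m$, hence in $d$ by dominated convergence of the series. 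The other direction is the standard $3\epsilon$ argument: if $f_n\to f$ in $d$, then $f_n(x_m)\to f(x_m)$ for every $m$, and an approximation of an arbitrary $x\in E$ by an $x_m$ from the dense set, together with the uniform bound $\Vert f_n-f\Vert_{E^{*}}\leq 2R$, yields $f_n(x)\to f(x)$.

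With $B_R$ now compact and metrizable, it is sequentially compact, so the bounded sequence $\{f_n\}\subset B_R$ admits a subsequence converging in the metric $d$, equivalently in the weak star topology, to some $f\in B_R$. This gives the conclusion. The only genuinely delicate point is the equivalence of topologies on $B_R$, where the uniform norm bound is crucial: without it the same countable set $\{x_m\}$ would not be enough to test weak star convergence, which is precisely why the statement is restricted to bounded sequences.
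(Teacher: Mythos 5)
Your proposal is correct and takes the same approach the paper implicitly relies on: the paper dispatches this proposition as a ``direct application of the Banach--Alaoglu Theorem,'' and your metrizability argument for the norm-bounded ball under separability is precisely the standard way to upgrade weak star compactness to sequential compactness. Nothing further to add.
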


Now we specialize in the $L^{p}$ spaces over an interval $I= (-T,T)$ with $T>0$. Recall the isomorphisms
$$L^{\infty}(I)\cong L^{1}(I)^{*},\qquad L^{p}(I)\cong L^{q}(I)^{*},\qquad p,\,q> 1,\quad p^{-1}+q^{-1}=1,$$
where the evaluation is given by integration
\begin{equation}\label{evaluation}
x(h)\,=\, \int_{-T}^{T}\,dt\ x(t)\,h(t).
\end{equation}
Note that because the $L^{p}(I)$ spaces are reflexive for $1<p<\infty$, the weak and weak star topologies coincide in these cases. As an immediate corollary of Proposition \ref{Alaoglu} we have

\begin{cor}
Consider the $L^{p}(I)$ space such that $1<p\leq \infty$ with the weak star topology. Then, the closed unit ball is sequentially compact.
\end{cor}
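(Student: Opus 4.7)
The plan is to reduce the corollary to Proposition \ref{Alaoglu} by exhibiting $L^p(I)$ as the dual of a separable Banach space in both regimes $1<p<\infty$ and $p=\infty$, and then to upgrade ``bounded sequence has a weak-star convergent subsequence'' to ``closed unit ball is sequentially compact'' by checking that the weak-star limit still lies in the unit ball.

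First I would split by the value of $p$. For $1<p<\infty$, let $q$ be the conjugate exponent; then $1<q<\infty$, and $L^p(I)\cong L^q(I)^{*}$ via the pairing \eref{evaluation}. The predual $L^q(I)$ is separable (for instance, polynomials with rational coefficients restricted to $I$ are dense). For $p=\infty$, one uses instead $L^\infty(I)\cong L^1(I)^{*}$, and $L^1(I)$ is also separable. So in every case of the corollary, $L^p(I)$ is the continuous dual of a separable Banach space.

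Second, let $\{f_n\}\subset L^p(I)$ be a sequence with $\Vert f_n\Vert_{L^p}\leq 1$. In particular the sequence is bounded in the dual norm, so by Proposition \ref{Alaoglu} applied to the separable predual identified in the previous step there exists a subsequence $\{f_{n_k}\}$ and some $f\in L^p(I)$ with $f_{n_k}\mathop{\rightharpoonup}^{\omega^{*}} f$, i.e.\ $\int_I f_{n_k}\,h\to \int_I f\,h$ for every $h$ in the predual.

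Finally, I would check the limit lies in the closed unit ball. Since
$$\Vert f\Vert_{L^p}\,=\,\sup\left\{\,\left\vert \int_I f\,h\right\vert\;:\;\Vert h\Vert_{L^q}\leq 1\right\},$$
the identity $\int_I f\,h=\lim_k \int_I f_{n_k}\,h$ combined with $\vert\int_I f_{n_k}\,h\vert\leq \Vert f_{n_k}\Vert_{L^p}\Vert h\Vert_{L^q}\leq 1$ gives $\Vert f\Vert_{L^p}\leq 1$, so $f$ belongs to the closed unit ball, proving sequential compactness. The only mildly delicate point is the $p=\infty$ case, where one must remember to invoke the $L^\infty\cong (L^1)^{*}$ duality rather than attempt to realize $L^\infty$ as a predual (it is not reflexive); otherwise everything is a direct application of Proposition \ref{Alaoglu} together with the weak-star lower semicontinuity of the dual norm.
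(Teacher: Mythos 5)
Your proof is correct and follows exactly the route the paper intends: the paper has just recalled the dualities $L^\infty(I)\cong L^1(I)^{*}$ and $L^p(I)\cong L^q(I)^{*}$, and states the corollary as immediate from Proposition \ref{Alaoglu} via separability of the predual. Your additional check that the weak-star limit has norm at most one (needed to land back in the closed unit ball) is the one small step the paper leaves implicit, and you handle it correctly via weak-star lower semicontinuity of the dual norm.
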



Because the indicators over open sets in $(-T,T)$ generate a dense set in $L^q(I)$ with $1\leq q <\infty$, by definition of the weak star topology and \eref{evaluation} we have

\begin{cor}\label{Alaoglu_cor}
Consider the $L^{p}(I)$ space such that $1<p\leq \infty$ with the weak star topology. Then,
$$x_n\,\rightharpoonup\,x,\qquad\mbox{iff}\qquad \int_A\,x_n\,\rightarrow\,\int_A\,x_n$$
for every open set $A$ in $I$. Equivalently, dividing by the measure of the respective open set $A$, we conclude that $x_n\,\rightharpoonup\,x$
if and only if on every open set the average on the right hand side converges to the average on the left.
\end{cor}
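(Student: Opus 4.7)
The plan is to exploit the duality $L^p(I)\cong L^q(I)^*$ (with $p^{-1}+q^{-1}=1$ and the convention $q=1$ when $p=\infty$) and reduce the weak-star test from all of $L^q(I)$ to the dense subset of simple functions built from indicators of open sets. The forward direction should be a one-line verification: if $A\subset I=(-T,T)$ is open it is in particular bounded, so $\mathbf{1}_A\in L^q(I)$ for every $q\in[1,\infty]$, and applying the definition of weak-star convergence with $h=\mathbf{1}_A$ in the pairing \eref{evaluation} yields $\int_A x_n\to\int_A x$.

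For the converse I would first extend by linearity: the hypothesis gives $\int s\,x_n\to\int s\,x$ for every finite linear combination $s=\sum c_i\mathbf{1}_{A_i}$ with the $A_i$ open, and such simple functions form the dense family invoked just before the corollary. Given arbitrary $h\in L^q(I)$ I would fix $\varepsilon>0$, choose a simple $s$ with $\Vert h-s\Vert_{q}<\varepsilon$, and split
\begin{equation*}
\left\vert\int(x_n-x)\,h\right\vert \;\leq\; \left\vert\int(x_n-x)\,s\right\vert \;+\; \left\vert\int(x_n-x)(h-s)\right\vert,
\end{equation*}
sending $n\to\infty$ in the first term via the linearity step and bounding the second by $(\Vert x_n\Vert_p+\Vert x\Vert_p)\,\varepsilon$ via Hölder.

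The main obstacle is that the Hölder estimate only closes once $\sup_n\Vert x_n\Vert_p<\infty$ is in hand; a spike sequence such as $x_n=n\mathbf{1}_{(0,1/n)}-n\mathbf{1}_{(1/n,2/n)}$ shows that pointwise convergence of $\int_A x_n$ on every open $A$ does not by itself prevent the $L^p$ norms from blowing up. I would handle this either by reading the corollary with a tacit boundedness hypothesis (which is automatic in the intended application, where the sequence sits inside a fixed ball of $L^p$ by Proposition \ref{Bound}) or by invoking a Banach--Steinhaus argument on the dense subspace of simple functions, concluding that the uniformly bounded pointwise limits there extend to a bounded family of functionals on all of $L^q(I)$.

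Once both implications are established on indicators of open sets, the second, equivalent formulation in terms of averages follows at once by dividing $\int_A x_n\to\int_A x$ by the Lebesgue measure $\vert A\vert>0$ of each open test set.
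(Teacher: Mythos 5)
Your overall strategy --- test against indicators of open sets, extend by linearity, approximate an arbitrary $h\in L^q(I)$ in norm by elements of this span, and close via H\"older --- is exactly the density argument the paper invokes in the sentence preceding the corollary, so there is no real divergence of method. The forward implication is correct as written, and you are also right that the converse, as literally stated, needs the bound $\sup_n\Vert x_n\Vert_p<\infty$: without it the H\"older estimate on $\int(x_n-x)(h-s)$ does not close, and in any case every weakly star convergent sequence is norm bounded by Proposition~\ref{Bound}. In the paper's applications the relevant sequences ($U_n^\perp$, $\dot\gamma_n^2$, $E_n^\perp$) are uniformly bounded in $L^\infty$ by energy conservation (Lemma~\ref{Bounds_Lemma}), so the boundedness is tacit there; note though that the bound comes from the mechanics, not from Proposition~\ref{Bound}, which would be circular at that stage.

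Two of your supporting points, however, do not hold up. First, the spike sequence $x_n=n\,\mathbf{1}_{(0,1/n)}-n\,\mathbf{1}_{(1/n,2/n)}$ is not a counterexample: take the open set $A=\bigcup_{k\geq 1}\left(\tfrac{1}{2k},\tfrac{1}{2k-1}\right)\subset I$. One computes $\lambda\left(A\cap\left(\tfrac{1}{2m},\tfrac{1}{m}\right)\right)=\tfrac{1}{2m(2m-1)}$ while $\lambda\left(A\cap\left(0,\tfrac{1}{2m}\right)\right)=\sum_{k>m}\tfrac{1}{2k(2k-1)}$, so $\int_A x_{2m}=\tfrac12-\tfrac{1}{2m-1}\to\tfrac12$, and $\int_A x_n$ does not converge to $0$ (nor, along other subsequences, to $\tfrac12$); thus the hypothesis ``$\int_A x_n$ converges for every open $A$'' already fails for this sequence, so it says nothing about whether the hypothesis forces boundedness. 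Second, your option (b) is not a valid repair: the uniform boundedness principle needs a pointwise bound on a non-meager subset of $L^q(I)$, and the linear span of indicators of open sets is a \emph{proper} dense subspace, hence of first category; pointwise bounds there carry no control over $\sup_n\Vert x_n\Vert_p$. Drop (b) and keep (a): read the corollary with the tacit norm bound supplied by Lemma~\ref{Bounds_Lemma}, and the rest of your argument is correct and coincides with the paper's.
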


As an example consider the sequence of functions $(\cos^2(x/n))_{n\in\N}$ in $L^\infty(I)$. This sequence does not converge in the strong topology. However, by the Riemann-Lebesgue Theorem, it weakly converges to the $1/2$ constant.
\begin{rmk}\label{Remark2}
In the previous example, note that although there are points whereat the sequence pointwise converge, these limits are in general different to the weak star limit, see the origin as an example whereat the sequence converges to one. We will see in the next section that a similar phenomenon occurs in Lemma \ref{Lema_weak_clave} where the weak star limit $\theta$ is half of the pointwise limit at the origin.
\end{rmk}


Every $L^{p}(I)$ space is clearly an $L^{\infty}(I)$--module with the strong topology. Actually, every $L^{p}(I)$ space with the weak star topology is also an $L^{\infty}(I)$--module as the next proposition shows.

\begin{prop}\label{module}
If $y_n\rightarrow y$ strongly in $L^{\infty}(I)$ and $x_n\rightharpoonup x$ weakly star in $L^{p}(I)$, then $y_n x_n\rightharpoonup y x$ weakly star in $L^{p}(I)$.
\end{prop}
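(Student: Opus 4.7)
The plan is to verify the defining property of weak star convergence in $L^p(I)$ directly. By Corollary \ref{Alaoglu_cor} and the duality $L^p(I)\cong L^q(I)^{*}$ with $p^{-1}+q^{-1}=1$ (taking $q=1$ when $p=\infty$), it suffices to show that for every test function $h\in L^q(I)$,
$$\int_I y_n\,x_n\,h\,dt\,\longrightarrow\,\int_I y\,x\,h\,dt.$$

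First I would split the difference into a ``module'' piece and a ``weak star'' piece:
$$\int_I y_n x_n h\,dt\,-\,\int_I y x h\,dt\,=\,\int_I (y_n - y)\,x_n\,h\,dt\,+\,\int_I (x_n - x)\,(y\,h)\,dt,$$
so that each term can be controlled by a separate hypothesis.

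For the first term, Hölder's inequality gives
$$\left|\,\int_I (y_n - y)\,x_n\,h\,dt\,\right|\,\le\,\|y_n - y\|_{\infty}\,\|x_n\|_p\,\|h\|_q.$$
By Proposition \ref{Bound} the weakly star convergent sequence $(x_n)$ is bounded in $L^p(I)$, and by hypothesis $\|y_n - y\|_{\infty}\to 0$, so this term vanishes in the limit. For the second term, the product $y\,h$ lies in $L^q(I)$ since $y\in L^{\infty}(I)$ and $h\in L^q(I)$, so $y\,h$ is itself an admissible test function; the weak star convergence $x_n\rightharpoonup x$ in $L^p(I)$ then forces the second integral to vanish as well.

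There is no real obstacle here: the argument is just the triangle inequality plus Hölder, and the proposition follows by adding the two estimates. The only points that require any attention are the verification that the test function $y\,h$ belongs to $L^q(I)$ (which is exactly the $L^{\infty}$--module structure of $L^q$) and the a priori boundedness of $(\|x_n\|_p)$ supplied by Proposition \ref{Bound}, both of which were prepared precisely for this step in the preliminary subsection.
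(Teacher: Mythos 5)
Your proof is correct, and the paper itself gives no proof of Proposition~\ref{module} (the whole preliminary subsection defers proofs to \cite{Rudin} and \cite{Adams}), so there is nothing to compare against in the source. Your argument---split $\int (y_n x_n - y x)h$ into $\int (y_n - y)x_n h$ plus $\int (x_n - x)(y h)$, kill the first with H\"older together with the uniform bound on $\|x_n\|_p$ from Proposition~\ref{Bound}, and kill the second by observing $y h \in L^q(I)$ is an admissible test function---is exactly the standard proof of this module property, and it covers both the case $1<p<\infty$ (with conjugate exponent $q$) and $p=\infty$ (with $q=1$) uniformly.
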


A function $x$ in $L^{p}(I)$ has a \textit{weak derivative} $\dot{x}$ in $L^{p}(I)$ if
$$\dot{x}(h)= -x(\dot{h}),\qquad h\in C_c^{1}(I)$$
where $C_c^{1}(I)$ is the space of differentiable real valued functions with compact support on $I$. If a weak derivative exists, then it is unique.

Define the \textit{Sobolev space} $W^{1,p}(I)$ as the linear subspace of $L^{p}(I)$ whose elements have weak derivative in $L^{p}(I)$ and Sobolev norm
$$\Vert x\Vert_{W^{1,p}}= \Vert x \Vert_p +\Vert \dot{x}\Vert_p.$$
In the case where $p=2$, the Sobolev space is denoted by $H^{1}(I)$ and the norm is induced by the inner product
$$\langle\, x,\, y\,\rangle_{H^{1}}= \langle\, x,\, y\,\rangle_2 + \langle\, \dot{x},\, \dot{y}\,\rangle_2.$$

\begin{prop}\label{Sobolev_Banach}
The Sobolev space $W^{1,p}(I)$ is identified with a closed linear subspace of $L^{p}(I)\times L^{p}(I)$ with the strong topology under the map $x\mapsto (x,\dot{x})$. In particular, it is a Banach space.
\end{prop}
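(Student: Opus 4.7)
The plan is to show that the map $T:W^{1,p}(I)\to L^p(I)\times L^p(I)$ defined by $T(x)=(x,\dot{x})$ is a linear isometry (with the sum norm on the target) onto a closed subspace; once this is established, the fact that $W^{1,p}(I)$ is Banach follows automatically, since a closed linear subspace of a Banach space is itself a Banach space.

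First I would verify the straightforward algebraic and metric properties. The uniqueness of the weak derivative (already noted in the paper) ensures that $T$ is well defined, and linearity of $T$ follows from the linearity of the defining relation $\dot{x}(h)=-x(\dot{h})$ in $x$. Injectivity is immediate because the first coordinate of $T(x)$ is $x$ itself. Comparing the Sobolev norm $\Vert x\Vert_{W^{1,p}}=\Vert x\Vert_p+\Vert\dot{x}\Vert_p$ with the sum norm on $L^p(I)\times L^p(I)$ shows that $T$ is an isometry onto its image.

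The main step is to show that $T(W^{1,p}(I))$ is closed in $L^p(I)\times L^p(I)$ with the strong topology. Suppose $(x_n,\dot{x}_n)\to (x,y)$ strongly in $L^p(I)\times L^p(I)$; I must show that $(x,y)$ lies in the image, i.e. that $y=\dot{x}$ in the weak sense. Fix an arbitrary test function $h\in C_c^1(I)$. Since $I=(-T,T)$ is bounded and $h$, $\dot{h}$ are bounded with compact support, both $h$ and $\dot{h}$ belong to $L^q(I)$ for the conjugate exponent $q$. By H\"older's inequality, strong convergence in $L^p$ implies convergence of the dualities against any fixed element of $L^q$, so
$$\int_I \dot{x}_n\,h\,dt\,\longrightarrow\,\int_I y\,h\,dt,\qquad \int_I x_n\,\dot{h}\,dt\,\longrightarrow\,\int_I x\,\dot{h}\,dt.$$
Passing to the limit in the identity $\int_I \dot{x}_n\,h\,dt=-\int_I x_n\,\dot{h}\,dt$ yields $\int_I y\,h\,dt=-\int_I x\,\dot{h}\,dt$, which is exactly the defining relation for $y$ to be the weak derivative of $x$. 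Hence $(x,y)=T(x)\in T(W^{1,p}(I))$.

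I expect the only subtlety here is the last step, where one must explicitly appeal to the fact that test functions in $C_c^1(I)$ sit inside every $L^q(I)$ so that the pairing survives strong $L^p$ convergence. Once closedness is in hand, the completeness of $L^p(I)\times L^p(I)$ with the sum norm transfers directly to $W^{1,p}(I)$ via the isometry $T$, completing the proof.
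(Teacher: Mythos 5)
Your proof is correct and complete, and it is the standard textbook argument for this fact. Note, however, that the paper does not actually supply a proof of this proposition: the preliminaries section explicitly defers the proofs of Propositions 4.1--4.8 to \cite{Rudin} and \cite{Adams}, so there is no in-paper argument to compare against. Your reasoning matches what one finds in those references: identify $W^{1,p}(I)$ isometrically with its graph in $L^{p}(I)\times L^{p}(I)$ (using the sum norm), and then prove closedness of the graph by passing to the limit in the defining integration-by-parts identity $\int_I \dot{x}_n\,h = -\int_I x_n\,\dot{h}$ against a fixed test function $h\in C_c^1(I)$. The step you flag as the only subtlety --- that $h,\dot{h}\in L^q(I)$ so the pairing is continuous under strong $L^p$ convergence --- is exactly the right point to be careful about, and your treatment of it via H\"older's inequality is correct for the full range $1\le p\le\infty$ since $I$ is bounded. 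The conclusion that a closed subspace of the Banach space $L^p(I)\times L^p(I)$ is itself Banach then finishes the argument.
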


Identifying $W^{1,p}(I)$ with this closed linear subspace, the weak star topology on this space is defined as the subspace topology induced by the weak star topology of the product. In particular, because the weak star topology of the product is the product topology of the weak star topologies of the respective factors, we have that  $x_n\rightharpoonup x$ weakly star on $W^{1,p}(I)$ if and only if $x_n\rightharpoonup x$ and $\dot{x}_n\rightharpoonup \dot{x}$ weakly star on $L^{p}(I)$.

\begin{prop}\label{Alaoglu_W}
If $1<p\leq \infty$, then every bounded sequence in $W^{1,p}(I)$ has a weakly star convergent subsequence.
\end{prop}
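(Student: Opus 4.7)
The plan is to leverage the embedding $W^{1,p}(I) \hookrightarrow L^p(I)\times L^p(I)$ of Proposition \ref{Sobolev_Banach}, which reduces the question to extracting weakly star convergent subsequences in each $L^p$ factor and then identifying the limit pair with a function and its weak derivative.

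Given a bounded sequence $(x_n)$ in $W^{1,p}(I)$, by the definition of the Sobolev norm both $(x_n)$ and $(\dot{x}_n)$ are bounded sequences in $L^p(I)$. The predual of $L^{p}(I)$ is $L^{q}(I)$ with $q$ finite (taking $q=1$ when $p=\infty$), and on the bounded interval $I$ this predual is separable. Hence Proposition \ref{Alaoglu} applies to both sequences, and a standard diagonal extraction produces a single subsequence, still denoted $(x_n)$, together with elements $x,y\in L^{p}(I)$ such that
$$x_n\,\rightharpoonup\,x,\qquad \dot{x}_n\,\rightharpoonup\,y$$
weakly star in $L^{p}(I)$.

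It remains to identify $y$ with the weak derivative of $x$. For any test function $h\in C_c^{1}(I)$, both $h$ and $\dot{h}$ are bounded and compactly supported, hence lie in $L^{q}(I)$. Using the pairing \eref{evaluation} to pass to the weak star limit in each integral yields
$$\int_{-T}^{T} y\,h\,dt\,=\,\lim_{n}\int_{-T}^{T}\dot{x}_n\,h\,dt\,=\,-\lim_{n}\int_{-T}^{T}x_n\,\dot{h}\,dt\,=\,-\int_{-T}^{T}x\,\dot{h}\,dt,$$
where the middle equality is the definition of the weak derivative applied to each $x_n$. Hence $y=\dot{x}$ in the weak sense, so $x\in W^{1,p}(I)$, and by the characterization of weak star convergence in $W^{1,p}(I)$ given after Proposition \ref{Sobolev_Banach} we conclude $x_n\rightharpoonup x$ weakly star in $W^{1,p}(I)$.

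The only delicate point is ensuring that Proposition \ref{Alaoglu} is actually available, which requires the predual $L^q(I)$ to be separable. For $1<p<\infty$ this is standard reflexivity, but the case $p=\infty$ is the critical one, since then we must invoke the separability of $L^{1}(I)$ on a bounded interval (for instance via density of step functions with rational breakpoints). Apart from this separability check, the argument is a routine combination of a diagonal subsequence and a distributional-limit identification, with no substantive obstacle.
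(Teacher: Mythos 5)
Your proof is correct and is precisely the argument the paper's framework is built for: apply Proposition \ref{Alaoglu} in each $L^p$ factor via the embedding of Proposition \ref{Sobolev_Banach}, extract a common subsequence, identify the limit pair as $(x,\dot{x})$ by passing to the weak star limit in the weak-derivative pairing against test functions in $C_c^1(I)\subset L^q(I)$, and invoke the product-topology characterization of weak star convergence in $W^{1,p}(I)$ stated after Proposition \ref{Sobolev_Banach}. The paper gives no explicit proof (the preliminaries defer to \cite{Rudin} and \cite{Adams}), so there is nothing to contrast against; your check that the predual $L^q(I)$ is separable on the bounded interval $I$, including the critical case $p=\infty$ via $L^1(I)$, is the only hypothesis of Proposition \ref{Alaoglu} that needs verifying and you handle it correctly.
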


\begin{prop}\label{magic}
Suppose that $1<p\leq \infty$ and consider a sequence $(x_n)$ in $W^{1,p}(I)$ such that $x_n\rightharpoonup {\bf 0}$ weakly star in $L^{p}(I)$ and the sequence of weak derivatives is uniformly bounded in $L^{p}(I)$. Then, $\dot{x}_n\rightharpoonup {\bf 0}$ weakly star in $L^{p}(I)$.
\end{prop}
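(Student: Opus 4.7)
My plan is to combine the defining integration-by-parts identity for the weak derivative with a three-epsilon density argument. The key observation is that the hypothesis $x_n \rightharpoonup 0$ weakly star in $L^p(I)$ already gives us control against all test elements in the predual $L^q(I)$ (where $q$ is the conjugate exponent, with $q=1$ when $p=\infty$), and we just need to transfer this control through the weak-derivative pairing.

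First I would establish the conclusion on the dense subspace $C_c^1(I)$. For any test function $h \in C_c^1(I)$, the definition of the weak derivative gives
$$\dot{x}_n(h) \,=\, -\,x_n(\dot h).$$
Since $\dot h$ is continuous with compact support in $I=(-T,T)$, it belongs to $L^q(I)$ for every $q \in [1,\infty]$, and in particular it is an admissible test element for the weak star convergence $x_n \rightharpoonup 0$ in $L^p(I)$. Consequently $x_n(\dot h) \to 0$, and hence $\dot{x}_n(h) \to 0$ for every $h \in C_c^1(I)$.

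Second I would upgrade this from $C_c^1(I)$ to all of $L^q(I)$ using a standard three-epsilon argument. By Proposition \ref{Bound} the hypothesis that $(\dot x_n)$ is uniformly bounded in $L^p(I)$ gives a constant $M$ with $\Vert \dot x_n\Vert_p \leq M$ for all $n$. Since $1 < p \leq \infty$, the conjugate exponent satisfies $1\leq q < \infty$, so $C_c^1(I)$ is dense in $L^q(I)$. Given $h \in L^q(I)$ and $\epsilon>0$, choose $h_\epsilon \in C_c^1(I)$ with $\Vert h-h_\epsilon\Vert_q < \epsilon$. Then by H\"older's inequality
$$\vert\,\dot x_n(h)\,\vert \,\leq\, \vert\,\dot x_n(h-h_\epsilon)\,\vert \,+\, \vert\,\dot x_n(h_\epsilon)\,\vert \,\leq\, M\,\epsilon\,+\,\vert\,\dot x_n(h_\epsilon)\,\vert.$$
The first step gives $\dot x_n(h_\epsilon)\to 0$, so $\limsup_n \vert \dot x_n(h)\vert \leq M\epsilon$. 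Since $\epsilon>0$ was arbitrary, $\dot x_n(h)\to 0$, which is the desired weak star convergence.

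I do not anticipate a real obstacle here: the argument is essentially a routine duality manipulation. The only point to watch is the requirement $p>1$, which is used precisely so that $q<\infty$ and hence $C_c^1(I)$ is dense in $L^q(I)$; without this density the three-epsilon extension from test functions to general elements of the predual would fail, and the statement itself would break (as is well known for the case $p=1$).
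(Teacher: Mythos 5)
Your argument is correct and complete. The paper states Proposition \ref{magic} without proof (it is in the preliminaries section, with a blanket citation to \cite{Rudin} and \cite{Adams}), and what you have written is exactly the standard duality argument one would find there: use the integration-by-parts identity $\dot{x}_n(h)=-x_n(\dot h)$ to get the conclusion on the dense class $C_c^1(I)\subset L^q(I)$, then transfer to all of $L^q(I)$ via H\"older and the uniform bound $\Vert\dot x_n\Vert_p\leq M$. Your closing remark correctly identifies the role of $p>1$: it guarantees $q<\infty$, which is what makes $C_c^1(I)$ dense in the predual and makes the three-epsilon extension go through. One cosmetic point: invoking Proposition \ref{Bound} to produce the constant $M$ is superfluous, since the uniform bound on $(\dot x_n)$ is already a stated hypothesis, not a consequence of weak star convergence.
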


Consider the Cauchy problem given by the ordinary differential equation and initial condition
\begin{equation}\label{Cauchy_problem_prel}
\dot{x}= F(t,x),\qquad x(0)= x_0
\end{equation}
such that $F$ is continuous and locally Lipschitz in the second variable. A \textit{weak solution} of the problem above is a solution $x$ in $W^{1,p}(I)$ of the integral equation
\begin{equation}\label{Cauchy_problem_prel_int}
x(t)= x_0+\int_0^{t}\, ds\ F(s,x(s)).
\end{equation}
Every differentiable solution in $C^{1}(\bar{I})$ of the problem \eref{Cauchy_problem_prel} is called a \textit{strong solution}. As an immediate corollary of the Sobolev embedding
$$W^{1,p}(I)\subset C(\bar{I})$$
and by Picard's Theorem assuring the local existence and uniqueness of a unique $C^1$ solution of \eref{Cauchy_problem_prel_int} among the continuous functions we have
\begin{prop}\label{weak_strong}
Every weak solution of \eref{Cauchy_problem_prel} is strong. In particular, it is unique.
\end{prop}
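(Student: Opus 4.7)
The plan is to chain together the Sobolev embedding with the regularity that the integral equation itself confers on any weak solution, and then appeal to Picard's theorem to get uniqueness. First I would note that, by the stated Sobolev embedding $W^{1,p}(I)\subset C(\bar{I})$, any weak solution $x$ is automatically continuous on $\bar{I}$. Hence the composition $s\mapsto F(s,x(s))$ is continuous on $\bar{I}$, since $F$ is continuous in both variables by hypothesis.

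Next I would use the integral equation \eref{Cauchy_problem_prel_int} in reverse: the right-hand side $x_0+\int_0^t F(s,x(s))\,ds$ is the indefinite integral of a continuous function, hence belongs to $C^1(\bar{I})$ by the fundamental theorem of calculus, with derivative $F(t,x(t))$. Since the weak solution $x$ agrees almost everywhere with this $C^1$ function, and both sides are continuous representatives on $\bar{I}$, they must coincide pointwise. In particular $x\in C^1(\bar{I})$ and $\dot{x}(t)=F(t,x(t))$ for every $t\in\bar{I}$, so $x$ is a strong solution of \eref{Cauchy_problem_prel}.

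For uniqueness, I would observe that any two weak solutions are, by the previous paragraph, continuous solutions of the integral equation \eref{Cauchy_problem_prel_int}. Since $F$ is locally Lipschitz in the second variable, Picard's theorem gives local existence and uniqueness among continuous solutions of this integral equation, and a standard connectedness argument extends this uniqueness to all of $\bar{I}$. Hence weak solutions are unique.

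The argument is essentially a bootstrap that is completely routine once the Sobolev embedding is invoked; the only mildly delicate point, and the one I would be careful to state precisely, is the passage from ``equal almost everywhere'' to ``equal pointwise,'' which requires using the continuous representative of $x$ supplied by the embedding rather than a generic $L^p$ representative.
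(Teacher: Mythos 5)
Your proof follows essentially the same route as the paper's: invoke the Sobolev embedding $W^{1,p}(I)\subset C(\bar I)$ to obtain a continuous representative, bootstrap through the integral equation to conclude $C^1$ regularity, and then appeal to Picard's theorem for uniqueness among continuous solutions. You are somewhat more explicit than the paper about the fundamental-theorem-of-calculus step and the upgrade from almost-everywhere to pointwise equality, but the key ingredients and their order coincide.
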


\subsection{Proof of Proposition \ref{Strong_const_prop}}\label{Proof_Strong}

\begin{lema}\label{Bounds_Lemma}
For every $k>0$, the solution $(\gamma_k,\,\alpha_k)$ of the Euler-Lagrange equations of the Lagrangian \eref{Lagrangian} with initial conditions
$$(\gamma_k(0),\,\alpha_k(0))\,=\,(0,\alpha_0),\qquad (\dot{\gamma}_k(0),\,\dot{\alpha}_k(0))\,=\,(\dot{\gamma}_0,\,\dot{\alpha}_0)$$
is defined over the whole real line and there is a constant $E_0$ independent of $k$ such that
$$\vert\dot{\gamma}_k(t)\vert,\ \vert\dot{\alpha}_k(t)\vert\ \leq\, (2\,i^{-1}\,E_0)^{1/2},\qquad\vert\gamma_k(t)\vert\,\leq\, (2\,k^{-1}\,E_0)^{1/2}$$
for every real $t$ where $i$ denotes the minimum of $i_x$, $i_y$ and $i_z$. In particular $(\gamma_k)_{k>0}$ uniformly converges to zero over the real line as $k$ goes to $+\infty$.
\end{lema}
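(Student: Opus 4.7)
The plan is to exploit the fact that the Lagrangian \eref{Lagrangian} has no explicit time dependence, so the associated energy
$$E_k(t)\,=\,\tfrac{1}{2}\,\gi(\alpha_k(t))\,\dot{\gamma}_k(t)^2\,+\,\tfrac{i_z}{2}\,\dot{\alpha}_k(t)^2\,+\,\tfrac{k}{2}\,\gamma_k(t)^2$$
is a first integral of the Euler-Lagrange system \eref{Euler_Lagrange_eqn}. Evaluating at $t=0$ and using $\gamma_k(0)=0$ gives
$$E_k(t)\,=\,E_k(0)\,=\,\tfrac{1}{2}\,\gi(\alpha_0)\,\dot{\gamma}_0^2\,+\,\tfrac{i_z}{2}\,\dot{\alpha}_0^2\,=:\,E_0,$$
a quantity independent of $k$. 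This is the whole point: dropping the Hooke's constant term from the energy balance at $t=0$ decouples the a priori energy bound from $k$.

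Once $E_0$ is in hand, the three desired bounds follow from the non-negativity of each of the three summands of $E_k$ together with the uniform lower bound $\gi(\alpha)\geq i$ coming from the fact that $\gi$ is a convex combination of $i_x$ and $i_y$ with weights $\cos^2(\alpha),\sin^2(\alpha)$. Concretely, from $\tfrac{1}{2}\,\gi(\alpha_k(t))\,\dot{\gamma}_k(t)^2\leq E_0$ one obtains $|\dot{\gamma}_k(t)|\leq (2E_0/i)^{1/2}$; from $\tfrac{i_z}{2}\,\dot{\alpha}_k(t)^2\leq E_0$ one obtains $|\dot{\alpha}_k(t)|\leq (2E_0/i_z)^{1/2}\leq (2E_0/i)^{1/2}$; and from $\tfrac{k}{2}\,\gamma_k(t)^2\leq E_0$ one obtains $|\gamma_k(t)|\leq (2E_0/k)^{1/2}$, which goes to zero uniformly in $t$ as $k\to+\infty$.

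For global existence, the key observation is that the mass matrix of \eref{Lagrangian} is $\mathrm{diag}(\gi(\alpha),\,i_z)$, which is uniformly positive definite because $\gi(\alpha)\geq i>0$. Hence the Euler-Lagrange equations \eref{Euler_Lagrange_eqn} form a smooth locally Lipschitz ODE system on $\R^4$ with coordinates $(\gamma,\alpha,\dot{\gamma},\dot{\alpha})$, and Picard's theorem yields a unique maximal solution on an open interval around $0$. The a priori bounds above show that on any such interval the velocities stay bounded and $\gamma_k$ stays bounded; since $\alpha_k$ can at worst grow linearly at rate $(2E_0/i_z)^{1/2}$, the solution cannot escape every compact set in finite time, so the maximal interval is all of $\R$.

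The statement is essentially a one-line energy argument, so no real obstacle arises; the only point worth being careful about is to write the lower bound $\gi(\alpha)\geq\min(i_x,i_y)\geq i$ explicitly before dividing by it, and to state the global existence conclusion as a consequence of the velocity bounds rather than as a separate result.
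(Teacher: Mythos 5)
Your proof is correct and follows essentially the same route as the paper: conservation of energy (with the initial energy $E_0$ independent of $k$ because $\gamma_k(0)=0$), non-negativity of each summand giving the three bounds, and the a priori bounds preventing finite-time blow-up so the maximal solution is global. The paper's version is a bit terser in the global-existence step (it simply notes the graph on $[0,\omega_+)$ would lie in a compact set, absurd for a maximal solution), but the content and structure are the same as yours.
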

\begin{proof}
Consider $k>0$ and suppose that the maximal interval of definition of the solution of the respective Euler-Lagrange equations is $(\omega_-,\,\omega_+)$. On this interval, the solution is smooth and the energy of the system is conserved
$$E_k\,=\, \frac{1}{2}\,\gi\,\dot{\gamma}_k^2\,+\,\frac{i_z}{2}\,\dot{\alpha}_k^2\, +\,\frac{k}{2}\,\gamma_k^2$$
where $\gi$ is evaluated at $\alpha_k$. In particular, this energy is independent of $k$ and equals the initial energy $E_0$. Because every term of the energy function is positive, each one is bounded from above by $E_0$ and the bounds follow immediately.

In particular, the angular coordinates are bounded by
$$\vert\gamma_k(t)\vert\,=\,\left\vert\int_0^t ds\,\dot{\gamma}_k(s)\right\vert\,\leq\,\int_0^t ds\,\vert\dot{\gamma}_k(s)\vert\,\leq\,(2\,i^{-1}\,E_0)^{1/2}\,t$$
and an analogous argument holds for $\alpha_k(t)$.

Suppose that $\omega_+$ is finite. Then, the graph of the solution restricted to the interval $[0,\,\omega_+)$ is contained on a compact set which is absurd hence $\omega_+$ is infinite and an analogous argument holds for $\omega_-$. This concludes the proof.
\end{proof}

\begin{lema}
Let $T>0$. There is a sequence $(k_n)$ and a $C^1$ function $\alpha$ defined on the interval $[-T,T]$ such that the sequence $(\alpha_n)\,=\,(\alpha_{k_n}|_{[-T,T]})$ converges in the $C^1$ topology to $\alpha$ as $n$ goes to $+\infty$. In particular, $\alpha(0)\,=\,\alpha_0$ and $\dot{\alpha}(0)\,=\,\dot{\alpha}_0$.
\end{lema}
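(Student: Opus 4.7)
The plan is a standard Arzel\`a--Ascoli compactness argument in $C^1([-T,T])$, built on the uniform bounds provided by Lemma \ref{Bounds_Lemma}. First, by Lemma \ref{Bounds_Lemma} the family $(\dot{\alpha}_k)_{k>0}$ is uniformly bounded on $\mathbb{R}$ by a constant depending only on the initial energy and the inertia moments. Since $\alpha_k(0)=\alpha_0$, this implies the family $(\alpha_k)_{k>0}$ is uniformly bounded on $[-T,T]$, and the velocity bound makes it uniformly Lipschitz, hence equicontinuous. A first application of Arzel\`a--Ascoli yields a subsequence $(\alpha_{k_n})$ converging uniformly on $[-T,T]$ to some continuous function $\alpha$.

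Next, to upgrade to $C^1$ convergence I would examine the second Euler--Lagrange equation
$$\ddot{\alpha}_k\,=\,\frac{i_z^{-1}}{2}\,\gi'(\alpha_k)\,\dot{\gamma}_k^2.$$
The derivative $\gi'$ is a bounded trigonometric polynomial, and $\dot{\gamma}_k^2$ is uniformly bounded by Lemma \ref{Bounds_Lemma}; hence $(\ddot{\alpha}_k)_{k>0}$ is uniformly bounded on $\mathbb{R}$ by a constant independent of $k$. Combined with the uniform bound on $(\dot{\alpha}_k)$, this shows $(\dot{\alpha}_{k_n})$ is uniformly bounded and uniformly Lipschitz on $[-T,T]$, hence equicontinuous. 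A second application of Arzel\`a--Ascoli extracts a further subsequence (which I relabel as $(k_n)$) such that $\dot{\alpha}_{k_n}$ converges uniformly on $[-T,T]$ to some continuous function $\beta$.

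Finally, I would pass to the limit in the identity
$$\alpha_{k_n}(t)\,=\,\alpha_0\,+\,\int_0^t\,\dot{\alpha}_{k_n}(s)\,ds,$$
where uniform convergence of the integrand on the compact interval $[-T,T]$ allows exchanging limit and integral. This gives $\alpha(t)=\alpha_0+\int_0^t\beta(s)\,ds$, so $\alpha$ is $C^1$ with $\dot{\alpha}=\beta$ and $\alpha(0)=\alpha_0$, while $\dot{\alpha}(0)=\beta(0)=\lim_n\dot{\alpha}_{k_n}(0)=\dot{\alpha}_0$. The extracted subsequence then satisfies $\alpha_{k_n}\to\alpha$ in the $C^1$ topology on $[-T,T]$, as required. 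There is no substantive obstacle here; the lemma is a purely qualitative compactness statement, and the only ingredient beyond equicontinuity is the uniform bound on $\ddot{\alpha}_k$ obtained by reading off the Euler--Lagrange equation and invoking the energy conservation from Lemma \ref{Bounds_Lemma}.
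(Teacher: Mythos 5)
Your proof is correct and follows essentially the same route as the paper's: uniform bounds from Lemma \ref{Bounds_Lemma}, two applications of Arzel\`a--Ascoli (once for $\alpha_k$ and once for $\dot{\alpha}_k$, using the Euler--Lagrange equation for $\alpha$ to bound $\ddot{\alpha}_k$), and then passing to the limit in the integral identity to identify $\dot{\alpha}=\beta$. In fact your write-up is slightly cleaner than the paper's, which drops the $\alpha_0$ term in the integral representation and the absolute value on $(i_y-i_x)$.
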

\begin{proof}
Consider the family of functions $\alpha_{k}|_{[-T,T]}$ and note that this is a family of Lipschitz functions uniformly bounded by $(2\,i^{-1}\,E_0)^{1/2}\,T$ with Lipschitz constant $(2\,i^{-1}\,E_0)^{1/2}$ independent from $k$. By the Arzel\`a--Ascoli Theorem, there is a sequence $\left(\alpha_{k_n}|_{[-T,T]}\right)$ of this family uniformly converging to some continuous function $\alpha$.

Denote by $\alpha_n$ the function $\alpha_{k_n}|_{[-T,T]}$ and consider the family of its derivatives. Again this is a family of Lipschitz functions uniformly bounded by $(2\,i^{-1}\,E_0)^{1/2}$ with Lipschitz constant independent from $k$ for
$$\vert\ddot{\alpha}_n(t)\vert\,\leq\,\vert i_z^{-1}\,(i_y-i_x)\,\cos(\alpha_n(t))\sin(\alpha_n(t))\,\dot{\gamma}_n^2\vert
\,\leq\,2\,i_z^{-1}\,(i_y-i_x)\,i^{-1}\,E_0$$
where we have used the Euler-Lagrange equation for the $\alpha$ coordinate. By the Arzel\`a--Ascoli Theorem, taking a subsequence if necessary, we may suppose without loss of generality that the sequence $(\dot{\alpha}_n)$ uniformly converges to some continuous function $\beta$.

We conclude that $\alpha$ is $C^1$ and $\beta$ coincides with $\dot{\alpha}$ for
$$\alpha_n(t)\,=\,\int_0^t ds\,\dot{\alpha}_n(s),\qquad t\in [-T,T]$$
and taking the limit on both sides and recalling that these convergences are uniform we have
$$\alpha(t)\,=\,\int_0^t ds\,\beta(s),\qquad t\in [-T,T]$$
and this concludes the proof.
\end{proof}

The rest of the proof consists in the calculation of the limit function $\alpha$.

\begin{lema}\label{Lema_weak_clave}
There is a constant $\theta$ such that taking a further subsequence if necessary we have the weakly star convergence
$$\dot{\gamma}_n^2\,\rightharpoonup\,\theta\,\gi(\alpha)^{-3/2},\qquad n\to+\infty.$$
Moreover, the constant $\theta$ equals
$$\theta\,=\,\frac{1}{2}\,\dot{\gamma}_0^2\,\gi(\alpha_0)^{3/2}.$$
\end{lema}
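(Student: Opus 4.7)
The plan is to combine a virial/equipartition identity for the fast $\gamma$-oscillation with the slow $\alpha$-equation, and then read off the weak-star limit of $\dot\gamma_k^2$ from the resulting effective ODE for $\alpha$. First I would recast the $\gamma$-Euler--Lagrange equation \eref{Euler_Lagrange_eqn} in the Sturm--Liouville form
$$\frac{d}{dt}\bigl(\gi(\alpha_k)\,\dot\gamma_k\bigr) \,=\, -\,k\,\gamma_k.$$
Pairing against $h\gamma_k$ for a test function $h\in C_c^1(I)$ and integrating by parts, the drift term $h\gi'(\alpha_k)\dot\alpha_k\gamma_k\dot\gamma_k$ cancels the corresponding piece from differentiating $h\gamma_k\gi(\alpha_k)$, leaving
$$\int_I h\,\gi(\alpha_k)\,\dot\gamma_k^2\,dt \,-\, \int_I h\,k\,\gamma_k^2\,dt \,=\, -\!\int_I \dot h\,\gamma_k\,\gi(\alpha_k)\,\dot\gamma_k\,dt.$$
The right-hand side tends to zero because $\gamma_k\to 0$ uniformly by Lemma \ref{Bounds_Lemma}, while $\dot\gamma_k$ and $\gi(\alpha_k)$ are uniformly bounded in $L^\infty$. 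This is the equipartition statement that the fast kinetic and potential contributions share the same weak-star limit in $L^\infty(I)$.

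Next, both $\gi(\alpha_k)\dot\gamma_k^2$ and $k\gamma_k^2$ are bounded in $L^\infty(I)$ by Lemma \ref{Bounds_Lemma} and energy conservation, so by Proposition \ref{Alaoglu} one may extract a further subsequence along which each converges weakly star. By equipartition these two limits coincide, and adding the two sequences gives
$$\gi(\alpha_k)\,\dot\gamma_k^2 + k\,\gamma_k^2 \,=\, 2E_0 - i_z\,\dot\alpha_k^2 \,\longrightarrow\, 2E_0 - i_z\,\dot\alpha^2$$
uniformly on $I$, since $\dot\alpha_k\to\dot\alpha$ uniformly by the previous lemma. Therefore $\gi(\alpha_k)\dot\gamma_k^2\rightharpoonup E_0-\tfrac{i_z}{2}\dot\alpha^2$, and Proposition \ref{module}, applied with the strongly convergent factor $\gi(\alpha_k)^{-1}\to\gi(\alpha)^{-1}$, yields
$$\dot\gamma_k^2 \,\rightharpoonup\, \gi(\alpha)^{-1}\bigl(E_0-\tfrac{i_z}{2}\dot\alpha^2\bigr).$$

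To finish, I would pass to the weak-star limit in the $\alpha$-Euler--Lagrange equation $i_z\ddot\alpha_k = \tfrac12\gi'(\alpha_k)\dot\gamma_k^2$, using Proposition \ref{module} once more on the right-hand side and identifying the limit of $\ddot\alpha_k$ by integration against test functions (which is legitimate since $\ddot\alpha_k$ is uniformly bounded in $L^\infty$). The limit equation reads
$$i_z\,\ddot\alpha \,=\, \tfrac12\,\gi'(\alpha)\,\gi(\alpha)^{-1}\bigl(E_0-\tfrac{i_z}{2}\dot\alpha^2\bigr).$$
Setting $u := E_0-\tfrac{i_z}{2}\dot\alpha^2$, one computes $\dot u = -i_z\dot\alpha\,\ddot\alpha = -\tfrac12\,u\,(d\ln\gi(\alpha)/dt)$, so $u = \theta\,\gi(\alpha)^{-1/2}$ for a constant $\theta$. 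Evaluating at $t=0$ using $\gamma_k(0)=0$ (so that $E_0 = \tfrac12\gi(\alpha_0)\dot\gamma_0^2+\tfrac{i_z}{2}\dot\alpha_0^2$) produces $\theta = \tfrac12\dot\gamma_0^2\,\gi(\alpha_0)^{3/2}$, and substituting $u=\theta\,\gi(\alpha)^{-1/2}$ back into the expression for the weak limit of $\dot\gamma_k^2$ delivers the claimed identity $\dot\gamma_k^2\rightharpoonup \theta\,\gi(\alpha)^{-3/2}$.

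The main obstacle, I expect, is the equipartition step: although $\gamma_k$ itself decays uniformly, the potential energy $\tfrac k2\gamma_k^2$ remains $O(1)$ and oscillates at frequency $\sim\sqrt k$, so it contributes a nontrivial weak-star limit that must be identified before the energy identity can be leveraged. The algebra of the Sturm--Liouville form is engineered precisely so that the drift term $\gi'\dot\alpha_k\dot\gamma_k$ is absorbed by the derivative of $h\gamma_k\gi(\alpha_k)$, leaving only the harmless factor $\dot h\,\gamma_k\,\gi(\alpha_k)\,\dot\gamma_k$ that vanishes by uniform decay of $\gamma_k$ combined with the $L^\infty$ bound on $\dot\gamma_k$.
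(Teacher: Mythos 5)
Your argument is correct, and it reaches the same conclusion as the paper by a genuinely different route. Both proofs open with the same equipartition observation, but the mechanics differ slightly: you put the $\gamma$-equation in Sturm--Liouville form and pair against $h\gamma_k$, whereas the paper multiplies the raw $\gamma$-equation by $\gamma_n/2$ and invokes Proposition~\ref{magic} on the sequence $\dot\gamma_n\gamma_n/2$. The real divergence is in what comes next. The paper works on the transversal energy $E_n^\perp=\frac{\gi}{2}\dot\gamma_n^2+\frac{k_n}{2}\gamma_n^2$: it computes $\dot E_n^\perp$ explicitly (using the $\gamma$-equation to kill the stiff terms), shows $(E_n^\perp)$ is bounded in $W^{1,\infty}$, extracts a weak-star convergent subsequence there via Proposition~\ref{Alaoglu_W}, obtains a linear ODE for the weak limit $\pi$ of $\dot\gamma_n^2$, and then upgrades that weak solution to a classical one via Proposition~\ref{weak_strong}. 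You instead pass to the weak-star limit in the $\alpha$-equation to obtain the effective second-order ODE for $\alpha$, observe that $u=E_0-\frac{i_z}{2}\dot\alpha^2$ is already a classical $C^1$ function satisfying $\dot u=-\tfrac12(\log\gi(\alpha))^{\textstyle\cdot}u$, and read off $\pi=\gi(\alpha)^{-1}u=\theta\,\gi(\alpha)^{-3/2}$. Your route buys you a clean bypass of the Sobolev machinery (Propositions~\ref{Alaoglu_W} and~\ref{weak_strong} and the $W^{1,\infty}$-boundedness of $E_n^\perp$); the price is that you effectively fold into the lemma the derivation of the effective $\alpha$-equation, which the paper defers to the proof of Proposition~\ref{Strong_const_prop}. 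Both pay the same essential costs -- equipartition plus identification of a first-order ODE for the limiting oscillation energy -- and both give the correct $\theta=\tfrac12\dot\gamma_0^2\,\gi(\alpha_0)^{3/2}$ upon evaluating at $t=0$ with $\gamma_k(0)=0$.
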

\begin{proof}
Define the transversal energy by
$$E_n^\perp\,=\, K_n^\perp\,+\,U_n^\perp,\qquad K_n^\perp\,=\,\frac{\gi(\alpha_n)}{2}\,\dot{\gamma}_n^2,\qquad U_n^\perp\,=\,\frac{k_n}{2}\,\gamma_n^2.$$
Because the sequences $(U_n^\perp)$ and $(\dot{\gamma}_n^2)$ are uniformly bounded, taking a further subsequence if necessary, by Corollary \ref{Alaoglu_cor} we may suppose without loss of generality that they weakly star converge to some $\sigma$ and $\pi$ in $L^{\infty}[-T, T]$ respectively
$$U_n^\perp\,\rightharpoonup\,\sigma,\quad\qquad \dot{\gamma}_n^2\,\rightharpoonup\,\pi.$$

The Euler-Lagrange equation for the $\gamma$ coordinate has the form
$$\ddot{\gamma}_n\,=\,b_n\,-\,k_n\,\gi(\alpha_n)^{-1}\,\gamma_n$$
where $b_n$ is a uniformly bounded sequence. Then, multiplying on both sides by $\gamma_n/2$ and taking the limit we have
$$\ddot{\gamma}_n\,\gamma_n/2\,=\,b_n\,\gamma_n/2\,-\,\gi(\alpha_n)^{-1}\,U_n^\perp\,\rightharpoonup\,-\gi(\alpha)^{-1}\,\sigma.$$

In particular, the sequence $\left(\dot{\gamma}_n\,\gamma_n/2\right)$ uniformly converges to zero and the respective derivatives are uniformly bounded hence by Proposition \ref{magic}, we have
$$\frac{d}{dt}\left(\dot{\gamma}_n\,\gamma_n/2\right)\,=\,\ddot{\gamma_n}\,\gamma_n/2\,+\,\dot{\gamma}_n^2/2\,\rightharpoonup\,{\bf 0}$$
and by the uniqueness of the limit we conclude the equipartition of the energy at the limit
$$\sigma\,=\,\frac{\gi(\alpha)}{2}\,\pi.$$

We claim that taking a subsequence if necessary, we may suppose that the previous sequence actually converges in the Sobolev space $W^{1,\infty}[-T,T]$. In effect,
$$E_n^\perp\,\rightharpoonup\,\frac{\gi(\alpha)}{2}\,\pi\,+\,\sigma\,=\,\gi(\alpha)\,\pi,$$
$$\dot{E}_n^\perp\,=\,\frac{\gi'(\alpha_n)}{2}\,\dot{\alpha}_n\,\dot{\gamma}_n^2\,+\,\gi(\alpha_n)\,\dot{\gamma}_n\,\ddot{\gamma}_n\, +\,
k_n\,\gamma_n\,\dot{\gamma}_n$$
$$=\,\dot{\gamma}_n\,\left(\gi'(\alpha_n)\,\dot{\alpha}_n\,\dot{\gamma}_n\,+\,\gi(\alpha_n)\,\ddot{\gamma}_n\, +\,
k_n\,\gamma_n\right)\,-\,\frac{\gi'(\alpha_n)}{2}\,\dot{\alpha}_n\,\dot{\gamma}_n^2$$
$$\rightharpoonup\,\frac{\gi'(\alpha)}{2}\,\dot{\alpha}\,\pi$$
where we have used the Euler-Lagrange equation for the $\alpha$ coordinate. Therefore, by Propositions \ref{Bound} and \ref{Sobolev_Banach}, the sequence $(E_n^\perp)$ is bounded in $W^{1,\infty}[-T,T]$ and by Proposition \ref{Alaoglu_W}, there is a subsequence converging with respect to the corresponding Sobolev norm to some element $E^\perp$ in this space verifying
$$E^\perp\,=\,\gi(\alpha)\,\pi,\qquad\quad \dot{E}^\perp\,=\, \frac{\gi'(\alpha)}{2}\,\dot{\alpha}\,\pi$$
where $\dot{E}^\perp$ is the weak derivative of $E^\perp$. In particular, $\pi$ is a weak solution of the following ordinary differential equation in the Sobolev space $W^{1,\infty}[-T,T]$
$$\dot{\pi}\,=\,-\frac{3}{2}\,\frac{d\,\log(\gi(\alpha))}{dt}\,\pi.$$

Because every weak solution of a Lipschitz ordinary differential equation in the mentioned Sobolev space is actually strong by Proposition \ref{weak_strong}, we conclude that $\pi$ is $C^1$ and there is a constant $\theta$ such that $\pi\,=\,\theta\,\gi(\alpha)^{-3/2}$.

Now we calculate the constant $\theta$. Recall that the initial energy $E_0$ is the constant independent of $n$
$$E_0\,=\,\frac{i_z}{2}\,\dot{\alpha}_0^2\,+\,\frac{\gi(\alpha_0)}{2}\,\dot{\gamma}_0^2.$$
For every $n$ the energy coincides with the initial energy hence
$$E_0\,=\,\frac{i_z}{2}\,\dot{\alpha}_n^2\,+\,E_n^\perp\,\rightharpoonup\,\frac{i_z}{2}\,\dot{\alpha}^2\,+\,\theta\,\gi(\alpha)^{-1/2}$$
and by the uniqueness of the limit both sides are equal. Evaluating the right hand side at time zero, we have the expression for $\theta$. This concludes the proof.
\end{proof}

\begin{proof}[Proof of Proposition \ref{Strong_const_prop}]
Consider the sequence in Lemma \ref{Lema_weak_clave}. The Euler-Lagrange equation for the $\alpha$ coordinate in integral form is
$$\dot{\alpha}_n(t)\,=\,\dot{\alpha}_0\,+\,\frac{i_z^{-1}}{2}\,\int_0^t ds\ \gi'(\alpha(s))\,\dot{\gamma}_n(s)^2$$
and because of the weakly star convergence of $(\dot{\gamma}_n^2)$, taking the limit on both sides we have
$$\dot{\alpha}(t)\,=\,\dot{\alpha}_0\,+\,\theta\,\frac{i_z^{-1}}{2}\,\int_0^t ds\ \gi'(\alpha(s))\,\gi(\alpha(s))^{-3/2}.$$
We conclude that actually $\alpha$ is at least $C^2$ and verifies the differential equation
\begin{equation}\label{Cauchy_problem}
\ddot{\alpha}\,=\,\theta\,\frac{i_z^{-1}}{2}\,\gi'(\alpha)\,\gi(\alpha)^{-3/2},\qquad \alpha(0)=\alpha_0,\quad\dot{\alpha}(0)=\dot{\alpha}_0
\end{equation}
which is the Euler-Lagrange equation of the effective Lagrangian \eref{Lagrangian_eff}.

We have proved that for every $T>0$ and every sequence in the family $(\alpha_k)_{k>0}$ there is a subsequence converging in $C^1[-T,T]$ to the unique solution $\alpha$ of the Cauchy problem \eref{Cauchy_problem}. We conclude that the whole family converges to this unique solution. Indeed, if there is a sequence converging to some function other than $\alpha$, there is a subsequence converging to $\alpha$ which absurd for the limits of the sequence and its subsequence must coincide.

Because the constant $\theta$ in Lemma \ref{Lema_weak_clave} is unique, an analogous argument as the previous one shows the weakly star convergence of the whole family to this constant. This concludes the proof.
\end{proof}

\ack

The author is a research fellow at \textit{Consejo Nacional de Ciencia y Tecnolog\'ia}.

\References

\bibitem[Ad]{Adams}
Adams R A 1975, \emph{Sobolev Spaces}, Academic Press, New York.

\bibitem[Ar]{Arnold}
Arnold V I 1978, \emph{Mathematical methods of classical mechanics}, Springer-Verlag, Berlin, Heidelberg, New York.

\bibitem[ACC]{Tennis_racket}
Ashbaugh M S, Chicone C C, Cushman R H 1991, \emph{The twisting tennis racket}, Journal of Dynamics and Differential Equations {\bf 3} 67--85.

\bibitem[AKN]{Arnold_Kozlov}
Arnold V I, Kozlov V V, Neishtadt A I 2006, \emph{Mathematical Aspects of Classical and Celestial Mechanics}, Springer Link, Third Edition.

\bibitem[Bo]{2}
Bondi H 1986, \emph{The rigid body dynamics of unidirectional spin}, Proc. R. Soc. Lond. A {\bf 405} 265--274.

\bibitem[Bo]{Bornemann}
Bornemann F A 1998, \emph{Homogenization in Time for Singularly Perturbed Mechanical Systems}, Lecture Notes in Mathematics {\bf 1687}, Springer.

\bibitem[Bu]{Burgos}
Burgos J M, \emph{Strong degenerate constraining in Lagrangian dynamics}, arXiv: 2104.06549. 

\bibitem[BKK]{3}
Borisov A V, Kazakov A O, Kuznetsov S P 2014, \emph{Nonlinear dynamics of the rattleback: a nonholonomic model}, Physics-Uspekhi {\bf 57} (5) 453--460.

\bibitem[BM]{4}
Borisov A V, Mamaev I S 2003, \emph{Strange attractors in rattleback dynamics}, Physics-Uspekhi {\bf 46} (4) 393--403.

\bibitem[BS]{BS}
Bornemann F A, Sch\"utte C 1997, \emph{Homogenization of Hamiltonian systems with a strong constraining potential}, Phys. D, {\bf 102}, 57--77.

\bibitem[EZ]{Evans_av}
Evans L C, Zhang T 2016, \emph{Weak convergence and averaging for ODE}, Nonlinear Analysis: Theory, Methods and Applications {\bf 138}, 83--92.

\bibitem[Ga]{Gallavotti}
Gallavotti G 1983, \emph{The Elements of Mechanics}, Springer-Verlag, Berlin, Heidelberg, New York.

\bibitem[GH]{7}
Garcia A, Hubbard M 1988, \emph{Spin reversal of the rattleback: theory and experiment}, Proc. R. Soc. Lond. A {\bf 418} 165--197.

\bibitem[KN]{8}
Kondo Y, Nakanishi H 2017, \emph{Rattleback dynamics and its reversal time of rotation}, Phys. Rev. E {\bf 95} 062207.


\bibitem[LL]{LL}
Landau L D, Lifshitz E M 1976, \emph{Mechanics}, Course of Theoretical Physics, Volume 1, Third Edition, Butterworth--Heinemann.

\bibitem[MT]{9}
Moffatt H K, Tokieda T 2008, \emph{Celt reversals: a prototype of chiral dynamics}, Proc. Royal Soc. Edinburgh {\bf 138A} 361--368.

\bibitem[Ru]{Rudin}
Rudin W 1976, \emph{Principles of Mathematical Analysis}, Madison, WI.

\bibitem[RU]{RubinUngar}
Rubin H, Ungar P 1957, \emph{Motion under a strong constraining force}, Comm. Pure. Applied Math., {\bf 10}, 65--87.

\bibitem[RWP]{12}
Rauch-Wojciechowski S, Przybylska M 2017, \emph{Understanding reversals of a rattleback}, Regul. Chaotic Dyn. {\bf 22} (4) 368--385.

\bibitem[Ta]{Takens}
Takens F 1980, \emph{Motion under the influence of a strong constraining potential}, Global Theory of Dynamical Systems, Z. Nitecki and C. Robinson eds., Springer-Verlag, Berlin, Heidelberg, New York, 425--445.

\bibitem[TG]{Integrable_rattleback}
Tudoran R M, Gîrban A 2020, \emph{On the rattleback dynamics}, Journal of Mathematical Analysis and Applications {\bf 488} (1)  124066.

\bibitem[Wa]{16}
Walker G T 1895, \emph{On a curious dynamical property of celts}, Proc. Camb. Philos. Soc. {\bf 8} 305--306.

\bibitem[YTM]{15}
Yoshida Z, Tokieda T, Morrison P J 2017, \emph{Rattleback: A model of how geometric singularity induces dynamic chirality}, Phys. Lett. A {\bf 381} (34) 2772--2777.

\endrefs

\end{document}